\newtheorem{thm}{Theorem}
\newtheorem{cor}{Corollary}
\renewcommand{\thetable}{\arabic{table}}
\newcommand{\red}[1]{\textcolor{black}{#1}}
\renewcommand*{\numberline}[1]{\hb@xt@1em{#1\hfil}} 
\begin{document}

\title{Symmetric Clifford twirling for cost-optimal quantum error mitigation \\in early FTQC regime}

\author{Kento Tsubouchi}
\email{tsubouchi@noneq.t.u-tokyo.ac.jp}
\affiliation{Department of Applied Physics, The University of \mbox{Tokyo, 7-3-1} Hongo, Bunkyo-ku, Tokyo 113-8656, Japan}

\author{Yosuke Mitsuhashi}
\affiliation{Department of Basic Science, The University of \mbox{Tokyo, 3-8-1} Komaba, Meguro-ku, Tokyo 153-8902, Japan}

\author{Kunal Sharma}
\affiliation{IBM Quantum, IBM T.J. Watson Research Center, Yorktown Heights, NY 10598, USA}

\author{Nobuyuki Yoshioka}
\email{nyoshioka@ap.t.u-tokyo.ac.jp}
\affiliation{Department of Applied Physics, The University of \mbox{Tokyo, 7-3-1} Hongo, Bunkyo-ku, Tokyo 113-8656, Japan}
\affiliation{\mbox{International Center for Elementary Particle Physics, The University of Tokyo, 7-3-1 Hongo, Bunkyo-ku, Tokyo 113-0033, Japan}}
\affiliation{JST, PRESTO, 4-1-8 Honcho, Kawaguchi, Saitama, 332-0012, Japan}

\begin{abstract}
Twirling noise affecting quantum gates is essential in understanding and controlling errors, but applicable operations to noise are usually restricted by symmetries inherent in quantum gates.
In this work, we propose symmetric Clifford twirling, a Clifford twirling utilizing only symmetric Clifford operators that commute with certain Pauli subgroups.
We fully characterize how each Pauli noise is converted through the twirling and show that certain Pauli noise can be scrambled to a noise exponentially close to the global white noise.
Moreover, we provide numerical demonstrations for highly structured circuits, such as Trotterized Hamiltonian simulation circuits, that noise effect on typical observables can be described by the global white noise.
We further demonstrate that symmetric Clifford twirling and its hardware-efficient variant using only local symmetric Clifford operators can significantly accelerate the scrambling.
These findings enable us to mitigate errors in non-Clifford operations with minimal sampling overhead in the early \red{fault-tolerant regime}.
\end{abstract}

\maketitle

\section*{Introduction}
As a powerful countermeasure for errors affecting quantum computers, fault-tolerant quantum computing (FTQC) using quantum error correction has been studied in recent decades~\cite{shor1995scheme, knill1996threshold, aharonov1997fault, lidar2013quantum}.
Despite significant experimental advances achieving the break-even point for error correction on multiple platforms~\cite{ofek2016extending,  krinner2022realizing, sivak2023real, google2023suppressing, bluvstein2024logical, acharya2024quantum}, the early generations of FTQC are still expected to be subject to a considerable amount of residual noise, due to the high overhead required for fully fledged FTQC. Consequently, developing techniques to eliminate the remaining errors in logical qubits is crucial.

One of the leading candidates is to employ the quantum error mitigation (QEM) techniques.
The goal of QEM is to predict the expectation value of an error-free quantum circuit by combining the output from error-prone quantum circuits, in exchange for increased circuit executions~\cite{temme2017error, li2017efficient, endo2021hybrid, cai2023quantum, kim2023evidence}.
QEM methods called probabilistic error cancellation~\cite{temme2017error, endo2018practical, van2023probabilistic} are known to effectively mitigate logical errors and thereby reduce the required code distance for logical qubits~\cite{piveteau2021error, lostaglio2021error, suzuki2022quantum}.
However, \red{it has recently been shown that} this method is suboptimal: the scaling of the sampling overhead is quadratically worse compared to the theoretical lower bound~\cite{tsubouchi2023universal}.
This implies that a cost-optimal QEM method, if available, could permit a logical error rate up to twice as large, allowing for a smaller code distance in logical qubits. 
Such an improvement could substantially reduce hardware requirements, especially in regimes where physical error rates are comparable to the threshold value for error-correcting codes.
Therefore, it is imperative to employ a cost-optimal QEM method that saturates the lower bound.

In fact, cost-optimal QEM becomes feasible under a specific condition: when the logical error can be characterized as global white noise (also known as global depolarizing noise), we can cost-optimally mitigate errors by simply rescaling the noisy expectation value~\cite{tsubouchi2023universal}.
Although white noise has been argued to arise under random circuit sampling~\cite{arute2019quantuma, dalzell2021random, deshpande2022tight, morvan2023phase}, it remains an open question how we can reliably ensure or accelerate the convergence to white noise.

A straightforward approach to converting noise to global white noise is Clifford twirling: by randomly inserting global Clifford operations before and after the noise channel, we can scramble the noise to global white noise~\cite{emerson2005scalable, dankert2009exact}.
This approach, however, is not practical for most non-Clifford gates since the noise channels cannot be divided from the target operations. In other words, it is impossible to insert additional Clifford operations in between, thus prohibiting full Clifford twirling.
One solution is to utilize only the Clifford operations that commute with the non-Clifford gates, as proposed for Pauli twirling in Ref.~\cite{kim2023scalable}.
However, the method only considers local operations for a single Pauli rotation and does not accelerate white noise approximation.
In order to fully exemplify the early FTQC scheme, it is an urgent task to establish a unified understanding and methodology regarding the full symmetric Clifford operations.

\begin{figure*}[t]
    \begin{center}
        \includegraphics[width=1.0\linewidth]{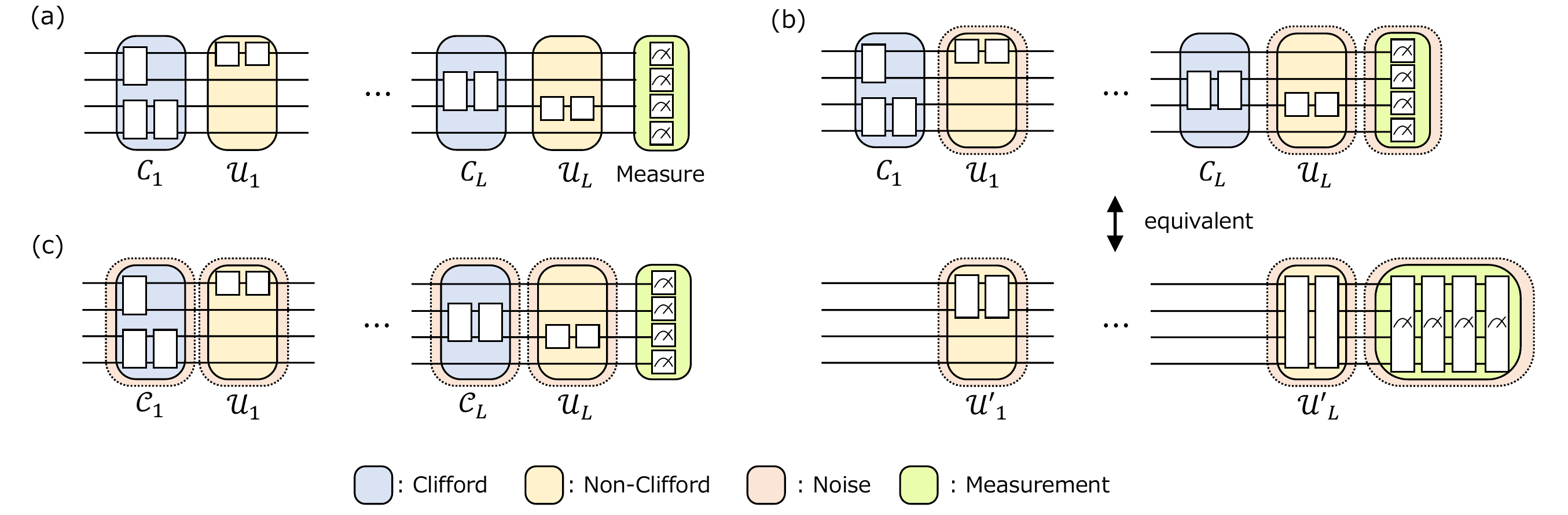}
        \caption{Graphical representation of logical quantum circuit structures in the early FTQC regime. Each panel indicates (a) a noiseless ideal circuit, (b) a noisy circuit under a single-thread supply of magic states (see \red{Supplementary Note~1} for details), and (c) a noisy circuit under multi-thread supply of magic states. 
        In (b), the noise in the Clifford layer is not the target of twirling, since we actually implement a conjugated circuit in which Clifford operations are absorbed into the Pauli measurement.
        }
        \label{fig_earlyFTQC_circuit}
    \end{center}
\end{figure*}

In this work, we bridge this gap by proposing \textit{symmetric Clifford twirling}, a Clifford twirling using symmetric Clifford operators~\cite{mitsuhashi2023clifford} that commute with certain Pauli subgroups.
By appropriately choosing the Pauli subgroup, we obtain symmetric Clifford operators that commute also with non-Clifford operations, allowing us to twirl its noise.
We completely characterize how Pauli noise channels are converted through symmetric Clifford twirling and show that some Pauli noise can be scrambled to the noise exponentially close to global white noise, enabling the cost-optimal QEM.
In addition, we propose \textit{$k$-sparse symmetric Clifford twirling}, a symmetric Clifford twirling using Clifford operators acting on up to $k$ qubits.
We show that this simplified twirling can still scramble the noise polynomially close to global white noise.

Furthermore, we numerically show for highly structured circuits that noise effects on the expectation values of typical observables are well described by the global white noise.
By applying our techniques to the Trotterized Hamiltonian simulation circuits, we show that such a scrambling effect can be accelerated by ($k$-sparse) symmetric Clifford twirling.
We also numerically verify the robustness of our protocol against noise on twirling gadgets.
These findings enable us to mitigate errors in non-Clifford operations, especially the Pauli rotation gates, with minimal sampling overhead in the early FTQC regime.

\red{\section*{Results}}
Let us begin by introducing a logical quantum circuit structure that encapsulates the essence of the early FTQC regime.
The noiseless $n$-qubit logical circuit shown in Fig.~\ref{fig_earlyFTQC_circuit}(a) consists of $L$ alternating sequences of a Clifford operation $\mathcal{C}_l$ and a non-Clifford layer $\mathcal{U}_l$, where the subscript $l$ indexes each layer.
\red{In the early stages of FTQC, it is expected that the size of magic state factories is limited, or magic states are supplied with small success probabilities so that the effective number of supply is small even if there are multiple factories.
In these cases, the non-Clifford operations implementable per single time step are limited~\cite{litinski2019game}.
Therefore, even though we have an identical noiseless logical circuit we wish to implement,} the appropriate noisy logical circuit structure varies depending on the available number of magic state factories.
Specifically, Figs.~\ref{fig_earlyFTQC_circuit}(b) and (c) illustrate the cases of single-thread and multi-thread supplies of magic states, respectively.

\red{When the (effective) number of magic state factories is small, we have access to only a single magic state per time step, which we call a single-thread supply of magic.
In this case,} it is necessary to employ the well-known compilation scheme proposed by Litinski~\cite{litinski2019game}.
\red{Under this compilation,} as shown in Fig.~\ref{fig_earlyFTQC_circuit}(b), all non-Clifford Pauli rotations are conjugated toward the beginning of the circuit, while Clifford operations are conjugated in such a way that they can be merged with measurement.
In this case, Clifford operations are not directly implemented on the logical circuits, and thus, noise on the Clifford operations does not need to be considered.

Conversely, when the number of magic state factories is sufficient, \red{we may prepare several magic states simultaneously, which we call multi-thread supply of magic states.
In this case,} such compilation is unnecessary.
Consequently, as depicted in Fig.~\ref{fig_earlyFTQC_circuit}(c), Clifford operations are directly executed via actual gate operations such as lattice surgery or gate teleportation.
In this scenario, both Clifford and non-Clifford operations are affected by logical errors.
Nevertheless, we assume that errors in non-Clifford operations dominate, allowing us to neglect errors in Clifford layers.
This assumption is particularly valid for Pauli rotation gates, where the gate count under Clifford+T synthesis may reach several tens to hundreds~\cite{ross2014optimal, yoshioka2024error}.
Considering that the T gate is also affected by distillation error, the logical error on Clifford gates is expected to be smaller by a factor of hundreds to thousands compared to Pauli rotation gates (see \red{Supplementary Note~1} for details).

Hereafter, our target for twirling is the noise affecting non-Clifford gates in both scenarios.
\red{Since one of the most promising applications in the early FTQC regime is Hamiltonian simulation or phase estimation based on Trotter decomposition~\cite{childs2018toward, campbell2021early, toshio2024practical},} we consider the case where the non-Clifford layer $\mathcal{U}_l$ is a Pauli-Z rotation gate $R_z(\theta) = e^{i\theta \mathrm{Z}}$ applied to the first qubit, formulated as $\mathcal{U}_l(\cdot) = \mathcal{U}(\cdot) = U\cdot U^\dag$ with $U = e^{i\theta \mathrm{Z}\otimes \mathrm{I}^{\otimes n-1}}$.
Our main focus is the noise $\mathcal{N}$ affecting the non-Clifford operation $\mathcal{U}$ as $\mathcal{N}\circ\mathcal{U}$.
We note that $\mathcal{U}(\cdot)$ can be transformed into general Pauli rotation gates by appropriately selecting Clifford operations $\mathcal{C}_l$.
Furthermore, our discussion can be generalized to arbitrary non-Clifford unitaries, \red{such as T gate and Toffoli gate (see \red{Supplementary Note~3} for details)}.

Notably, the noise $\mathcal{N}$ following the non-Clifford operation $\mathcal{U}$ tends to be Pauli noise. This is because noise affecting non-Clifford gates belonging to the third level of the Clifford hierarchy~\cite{gottesman1999quantum}, such as the T gate or the Toffoli gate, can be Pauli-twirled into Pauli noise~\cite{cai2023quantum}. Furthermore, algorithmic errors affecting synthesized non-Clifford gates can also be transformed into Pauli noise via randomized compiling~\cite{yoshioka2024error}. Therefore, we assume that the logical noise affecting the non-Clifford layer $\mathcal{U}$ is Pauli noise, expressed as
\begin{equation}
    \label{eq_pauli_noise}
    \mathcal{N} := (1-p_{\mathrm{err}})\mathcal{I} + \red{\sum_{i=1}^{4^n-1}} p_i\mathcal{E}_{P_i},    
\end{equation}
where Pauli error $\mathcal{E}_{P_i}(\cdot) := P_i\cdot P_i$ occurs with probability $p_i$, and $p_{\mathrm{err}} := \sum_i p_i$ represents the total error probability.
\red{We guide the readers to \red{Supplementary Note~4} for the case of general noise channel.}

Although the accumulation of noise $\mathcal{N}$ can degrade the logical quantum state, QEM techniques allow us to estimate the expectation value of some observables. A major drawback of QEM is the sampling overhead—defined as the multiplicative factor in the number of circuit executions required to restore the ideal expectation value—which increases exponentially with the number of noisy layers $L$~\cite{tsubouchi2023universal, takagi2023universal, quek2022exponentially}. However, if the total error rate $p_{\mathrm{tot}} := p_{\mathrm{err}}L$ remains constant, error mitigation can be achieved with a constant sampling overhead. In the early FTQC regime, the error probability $p_{\mathrm{err}}$ is expected to be lower than in the noisy intermediate-scale quantum (NISQ) regime, so we can apply QEM on deeper quantum circuits.

In particular, if the noise $\mathcal{N}$ can be transformed into the global white noise defined as
\begin{equation}
    \label{eq_white_noise}
    \mathcal{N}_{\mathrm{wn},p_{\mathrm{err}}} := (1-p_{\mathrm{err}})\mathcal{I} + p_{\mathrm{err}}\mathbb{E}_{P\in\mathcal{P}_n-\qty{I}^{\otimes n}}[\mathcal{E}_{P}],
\end{equation}
error mitigation can be achieved by simply rescaling the noisy expectation value \red{as $e^{p_{\mathrm{tot}}} \ev*{O}_{\mathrm{noisy}}$.}
Here, $\mathcal{P}_n:= \qty{\mathrm{I},\mathrm{X},\mathrm{Y},\mathrm{Z}}^{\otimes n}$ denotes the set of $n$-qubit Pauli operators, $\mathbb{E}$ represents the uniform average, \red{and $\ev*{O}_{\mathrm{noisy}}$ is the noisy expectation value obtained from the noisy logical circuit.}
The sampling overhead for rescaling \red{is $e^{2p_{\mathrm{tot}}}$, which is the square of the rescaling coefficient $e^{p_{\mathrm{tot}}}$.
This} not only represents a quadratic improvement over the previous probabilistic error cancellation approach, which scales as $e^{4p_{\mathrm{tot}}}$~\cite{piveteau2021error, lostaglio2021error, suzuki2022quantum}, but also achieves the theoretical lower bound on the sampling overhead~\cite{tsubouchi2023universal} (see \red{Supplementary Note~2} for details). Thus, in this work, we aim to convert the noise $\mathcal{N}$ into global white noise $\mathcal{N}_{\mathrm{wn},p_{\mathrm{err}}}$ for the cost-optimal QEM.

\subsection*{Symmetric Clifford twirling}
One naive way of converting noise to global white noise is to perform Clifford twirling~\cite{emerson2005scalable, dankert2009exact}.
Clifford twirling scrambles the noise $\mathcal{N}$ into the global white noise $\mathcal{N}_{\mathrm{wn},p_{\mathrm{err}}}$ by applying random Clifford unitary $D\in \mathcal{G}_n$ and its conjugation $D^\dagger$ before and after the noise $\mathcal{N}$ as
\begin{equation}
    \label{eq_twirling}
    \mathscr{T}(\mathcal{N})
        := \mathbb{E}_{D\in\mathcal{G}_{n}}[\mathcal{D}^\dagger\circ\mathcal{N}\circ\mathcal{D}]
        = \mathcal{N}_{\mathrm{wn},p_{\mathrm{err}}}.
\end{equation}
Here, $\mathcal{G}_n$ represents the $n$-qubit Clifford group, $\mathcal{D}(\cdot) := D\cdot D^\dagger$, and $\mathscr{T}$ denotes the superchannel representing Clifford twirling.
This operation is, however, not considered as a practical option in the community, since the noise is inseparable from the target non-Clifford unitary $\mathcal{U}(\cdot)=U\cdot U^\dag$.
In order to insert $D$ before the noise $\mathcal{N}$, one must insert $U^\dag D U$ before the noisy non-Clifford layer $\mathcal{N}\circ\mathcal{U}$, which may introduce additional errors if $U^\dag D U$ is an intricate non-Clifford unitary.

\begin{table*}[t]
  \caption{Distance $v$ between the Pauli noise $\mathcal{N}$ defined as in Eq.~\eqref{eq_singleq_pauli_noise} and global white noise $\mathcal{N}_{\mathrm{wn},p_{\mathrm{err}}}$ with the same error rate $p_{\mathrm{err}}$.
  This table presents values of $v$ defined in Eq.~\eqref{eq_2-norm} for three scenarios: the original noise, after applying symmetric Clifford twirling, and after applying $k$-sparse symmetric Clifford twirling. Here, we only consider the leading order of $n$.}
  \label{tbl_}
  \centering
  \begin{tabular}{c||c|c}
      Noise model &  With Pauli-Z component ($p_z \neq 0$) & Without Pauli-Z component ($p_z = 0$) \\\hline\hline
      Original noise & $\sqrt{p_x^2+p_y^2+p_z^2}/\red{p_{\mathrm{err}}}$ & $\sqrt{p_x^2+p_y^2}/\red{p_{\mathrm{err}}}$ \\ 
      Symmetric Clifford twirling  & $p_z/p_{\mathrm{err}}$ & $O(2^{-n})$\\ 
      $k$-sparse symmetric Clifford twirling & $p_z/p_{\mathrm{err}}$ & $O(n^{-(k-1)/2})$\\
  \end{tabular}
\end{table*}

\begin{figure}[t]
    \begin{center}
        \includegraphics[width=0.99\linewidth]{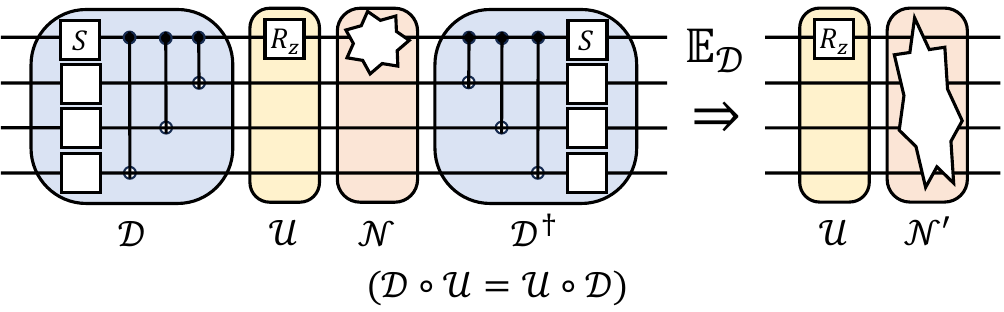}
        \caption{Conceptual diagram of symmetric Clifford twirling. By randomly sampling Clifford unitary $\mathcal{D}(\cdot)=D\cdot D^\dagger$ that commutes with non-Clifford layer $\mathcal{U}(\cdot)=U\cdot U^\dagger$, we can scramble the noise $\mathcal{N}$ without affecting $\mathcal{U}$.
        }
        \label{fig_SCT}
    \end{center}
\end{figure}

One feasible alternative is to consider Clifford unitaries $D$ that commute with $U$, since $U^\dag D U = D$ becomes a Clifford unitary.
To characterize Clifford unitaries commuting with $U$, let us define a Pauli subgroup
\begin{equation}
    \label{eq_Q_U}
    \mathcal{Q}_{U} :=\ev*{\qty{P\in\mathcal{P}_n\;|\;\mathrm{tr}[PU]\neq0}},
\end{equation}
where $\ev*{\cdot}$ represents the group generated by the elements within the bracket.
Additionally, let us define the $\mathcal{Q}_{U}$-symmetric Clifford group as:
\begin{equation}
    \mathcal{G}_{n,\mathcal{Q}_{U}} := \qty{C\in\mathcal{G}_{n} \;|\; \forall P \in \mathcal{Q}_{U}, \;[C,P]=0},
\end{equation}
where its complete and unique construction method using simple quantum gates is given in Ref.~\cite{mitsuhashi2023clifford}.
From the definition, $D\in\mathcal{G}_{n,\mathcal{Q}_{U}}$ commutes with the non-Clifford operator $U$, so we can twirl the noise layer  $\mathcal{N}$ using $D\in\mathcal{G}_{n,\mathcal{Q}_{U}}$ with negligible errors.
We term such twirling as {\it symmetric Clifford twirling}, whose effect is represented using a superchannel defined as:
\begin{equation}
    \mathscr{T}_{Q_U}(\mathcal{N})
    :=\mathbb{E}_{D\in\mathcal{G}_{n, \mathcal{Q}_U}}[\mathcal{D}^\dagger\circ\mathcal{N}\circ\mathcal{D}].
\end{equation}

For the sake of clarity, let us consider the scenario where the non-Clifford layer $\mathcal{U}$ consists of  Pauli-Z rotation gate $R_z(\theta) = e^{i\theta \mathrm{Z}}$ applied to the first qubit as $U=e^{i\theta \mathrm{Z}\otimes \mathrm{I}^{\otimes n-1}}$ (see Fig.~\ref{fig_SCT}).
We assume that Pauli-X, Y, and Z noises affect the Pauli-Z rotation gate with probability $p_x$, $p_y$, and $p_z$, which is characterized as a noise channel 
\begin{equation}
    \label{eq_singleq_pauli_noise}
    \begin{aligned}
        \mathcal{N} &= (1-p_{\mathrm{err}})\mathcal{I} \\
        &+p_x\mathcal{E}_{\mathrm{X}\otimes \mathrm{I}^{\otimes n-1}} + p_y\mathcal{E}_{\mathrm{Y}\otimes \mathrm{I}^{\otimes n-1}} +p_z\mathcal{E}_{\mathrm{Z}\otimes \mathrm{I}^{\otimes n-1}}.
    \end{aligned}
\end{equation}
In this particular case, $\mathcal{Q}_U$ simplifies to $\mathcal{Q}_{U} = \qty{\mathrm{I},\mathrm{Z}}\otimes \qty{\mathrm{I}}^{\otimes n-1}$, and we can express the effect of symmetric Clifford twirling to the Pauli noise as presented in the following theorem.

\begin{thm}[Symmetric Clifford twirling of single-qubit Pauli channel]
    \label{thm_1}
    Let $\mathcal{E}_{P\otimes \mathrm{I}^{\otimes{n-1}}}(\cdot) = (P\otimes\mathrm{I}^{\otimes n-1})\cdot (P\otimes\mathrm{I}^{\otimes n-1})$ be a single-qubit Pauli channel with $P = \mathrm{X}, \mathrm{Y}, \mathrm{Z}$ and $\mathcal{Q}_{U} = \qty{\mathrm{I},\mathrm{Z}}\otimes \qty{\mathrm{I}}^{\otimes n-1}$.
    Then, by applying symmetric Clifford twirling to the Pauli channel as $\mathscr{T}_{\mathcal{Q}_U}(\mathcal{E}_{P\otimes \mathrm{I}^{\otimes{n-1}}})
    =\mathbb{E}_{D\in\mathcal{G}_{n, \mathcal{Q}_U}}[\mathcal{D}^\dagger\circ\mathcal{E}_{P\otimes \mathrm{I}^{\otimes{n-1}}}\circ\mathcal{D}]$, we can scramble the Pauli-X and Y channels as
    \begin{equation}
        \mathscr{T}_{\mathcal{Q}_{U}} (\mathcal{E}_{P\otimes \mathrm{I}^{\otimes{n-1}}})
        = \underset{\substack{Q_1\in\qty{\mathrm{X}, \mathrm{Y}} \\ Q_2\in\mathcal{P}_{n-1}}}{\mathbb{E}}\qty[\mathcal{E}_{Q_1\otimes Q_2}]
    \end{equation}
    for $P = \mathrm{X}, \mathrm{Y}$, while the Pauli-Z channel cannot be scrambled through the symmetric Clifford twirling:
    \begin{equation}
        \mathscr{T}_{\mathcal{Q}_{U}}(\mathcal{E}_{\mathrm{Z}\otimes \mathrm{I}^{\otimes{n-1}}})
        =\mathcal{E}_{\mathrm{Z}\otimes \mathrm{I}^{\otimes{n-1}}}.
    \end{equation}
\end{thm}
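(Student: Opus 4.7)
The plan is to reduce the statement to the computation of orbits of Pauli operators under conjugation by the symmetric Clifford group $\mathcal{G}_{n,\mathcal{Q}_U}$. The standard fact I will invoke is that for any Clifford $D$, $D^{\dagger}PD = \pm P'$ for some Pauli $P'$, and since $\mathcal{E}_{\pm P'} = \mathcal{E}_{P'}$, we have $\mathcal{D}^{\dagger}\circ\mathcal{E}_{P}\circ\mathcal{D} = \mathcal{E}_{D^{\dagger}PD}$. Then, because $D \mapsto D^{\dagger}PD$ is a group action whose fibers over each orbit element have equal size, the twirl reduces to $\mathscr{T}_{\mathcal{Q}_U}(\mathcal{E}_P) = \mathbb{E}_{Q\in\mathcal{O}_P}[\mathcal{E}_Q]$, where $\mathcal{O}_P$ is the orbit of $P$ (taken mod global sign). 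So the theorem reduces to showing $\mathcal{O}_{Z_1} = \{Z_1\}$ and $\mathcal{O}_{X_1} = \mathcal{O}_{Y_1} = \{X,Y\} \otimes \mathcal{P}_{n-1}$, where $Z_1 := Z\otimes \mathrm{I}^{\otimes n-1}$, etc.

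The $Z_1$ case is immediate: by definition every $D \in \mathcal{G}_{n,\mathcal{Q}_U}$ commutes with $Z_1$, so $D^{\dagger}Z_1 D = Z_1$ and the orbit is trivial. For the upper bound on $\mathcal{O}_{X_1}$, I will use that conjugation by $D$ preserves commutation with $Z_1$: if $P$ anticommutes with $Z_1$, then so does $D^{\dagger}PD = D^{\dagger}PD \cdot D^{\dagger}Z_1 D / D^{\dagger}Z_1D$, and the Paulis anticommuting with $Z_1$ are precisely those in $\{X,Y\}\otimes\mathcal{P}_{n-1}$. This shows $\mathcal{O}_{X_1},\mathcal{O}_{Y_1}\subseteq \{X,Y\}\otimes\mathcal{P}_{n-1}$.

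For the matching lower bound (transitivity), I will exhibit enough elements of $\mathcal{G}_{n,\mathcal{Q}_U}$. All of the following commute with $Z_1$ and hence lie in $\mathcal{G}_{n,\mathcal{Q}_U}$: the phase gate $S_1$, the gates $\mathrm{CNOT}_{1,j}$ and $\mathrm{CZ}_{1,j}$ for $j>1$ (control on qubit $1$), and every Clifford acting only on qubits $2,\ldots,n$. Their conjugation actions give $S_1: X_1 \mapsto Y_1$; $\mathrm{CNOT}_{1,2}: X_1 \mapsto X_1 X_2$; and an arbitrary Clifford on qubits $2,\ldots,n$ sends $X_1 X_2$ to $X_1 \otimes Q_2$ for any nonidentity $Q_2\in\mathcal{P}_{n-1}$ (by transitivity of the Clifford group on nonidentity Paulis). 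Together with the identity element (giving $X_1\otimes \mathrm{I}^{\otimes n-1}$) and the $S_1$-swap to toggle $X\leftrightarrow Y$ on the first qubit, this exhibits every element of $\{X,Y\}\otimes\mathcal{P}_{n-1}$ in the orbit, completing the proof. The main obstacle is the transitivity argument in this last paragraph; once the generators and their conjugation rules are laid down explicitly, the remaining work is routine bookkeeping.
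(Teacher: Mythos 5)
Your proof is correct, and it takes a genuinely different route from the paper's. The paper does not prove Theorem~1 directly: it proves the general Theorem~2 via the structure theorem of Ref.~\cite{mitsuhashi2023clifford}, which says every $D\in\mathcal{G}_{n,\mathcal{Q}_U}$ decomposes uniquely (after conjugation by $W$) as a product of three commuting factors---a product of controlled-Paulis $\Lambda_i(R_i)$, a product of $\mathrm{CZ}$'s and $\mathrm{S}$'s on the middle block, and an arbitrary Clifford on the last $n_3$ qubits---and then computes the twirl layer by layer as $\mathscr{T}_3\ast\mathscr{T}_2\ast\mathscr{T}_1$. Theorem~1 is then recovered as the instance $W=I$, $n_1=0$, $n_2=1$, $n_3=n-1$. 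You instead bypass the structure theorem entirely: you observe that the twirl equals a uniform average over the conjugation orbit of $P$ (orbit--stabilizer) and then sandwich the orbit. The upper inclusion follows because conjugation by any $D\in\mathcal{G}_{n,\mathcal{Q}_U}$ fixes $Z_1$ and hence preserves the (anti)commutation pattern with $Z_1$, forcing the orbit of $X_1$ inside $\{X,Y\}\otimes\mathcal{P}_{n-1}$; the lower inclusion follows from exhibiting $S_1$, $\mathrm{CNOT}_{1,2}$, and arbitrary Cliffords on qubits $2,\ldots,n$ as specific group elements acting transitively on that set. What each approach buys: yours is more elementary and entirely self-contained for the $n_1=0$, $n_2=1$ case, requiring no prior classification of the symmetric Clifford group; the paper's approach proves the full Theorem~2 (arbitrary $n_1,n_2,n_3$ and conjugating Clifford $W$) in one pass, where an orbit-by-hand argument would have to track several interacting cases on the middle block and would be considerably messier. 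It is worth noting that the generators you chose ($\mathrm{S}_1$, $\mathrm{CNOT}_{1,j}$, Cliffords on the trailing qubits) are precisely the building blocks $D_2$, $D_1$, $D_3$ of the decomposition the paper uses, so the two arguments are morally aligned even though yours invokes only as much of that structure as is strictly needed.
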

We generalize Theorem~\ref{thm_1} to arbitrary non-Clifford unitaries and Pauli operators in \red{Supplementary Note~3}.

Let us evaluate how effectively symmetric Clifford twirling scrambles noise into global white noise.
As a performance metric for twirling, we introduce a measure to evaluate the proximity of Pauli noise to global white noise.
Given the Pauli noise in Eq.~\eqref{eq_pauli_noise} and the global white noise in Eq.~\eqref{eq_white_noise} with the same error rate $p_{\mathrm{err}}$, we define the 2-norm $v$ of the normalized error probabilities as
\begin{equation}
    \label{eq_2-norm}
    v := \sqrt{\red{\sum_{i=1}^{4^n-1}}\qty(\frac{p_i}{p_{\mathrm{err}}} - \frac{1}{4^n-1})^2} = \sqrt{\red{\sum_{i=1}^{4^n-1}}\qty(\frac{p_i}{p_{\mathrm{err}}})^2 - \frac{1}{4^n-1}}.
\end{equation}
This distance measure $v$ is valuable as it helps bound the bias between the ideal and the rescaled noisy expectation values on average (see \red{Supplementary Note~6} for details). By analyzing how $v$ changes through noise conversion, we can assess its efficacy in QEM. We note that the commonly used diamond norm, normalized by error probability, corresponds to the 1-norm of normalized error probabilities~\cite{magesan2012characterizing}:
\begin{equation}
    \norm{\mathcal{N} - \mathcal{N}_{\mathrm{wn}, p_{\mathrm{err}}}}_{\Diamond}/p_{\mathrm{err}} = \red{\sum_{i=1}^{4^n-1}} \abs{\frac{p_i}{p_{\mathrm{err}}} - \frac{1}{4^n-1}}.
\end{equation}

For single-qubit Pauli noise represented as Eq.~\eqref{eq_singleq_pauli_noise}, the dominant term in the distance $v$ defined in Eq.~\eqref{eq_2-norm} is $\sqrt{p_x^2+p_y^2+p_z^2}/p_{\mathrm{err}}$.
When symmetric Clifford twirling is applied to this noise, Theorem~\ref{thm_1} indicates that Pauli-Z noise remains unscrambled, while Pauli-X and Y noise is well dispersed among other qubits.
This occurs because Pauli-Z noise commutes with symmetric Clifford operations and remains unchanged, whereas Pauli-X and Y noise propagates to other qubits through conjugation with symmetric Clifford operations.
Indeed, the twirled noise $\mathscr{T}_{\mathcal{Q}_{U}}(\mathcal{N})$ is given by
\begin{equation}
    \label{eq_global_twirled}
    \begin{aligned}
        &~~(1-p_{\mathrm{err}})\mathcal{I} \\
        &+ p_{\mathrm{err}}\qty(\frac{p_x+p_y}{p_{\mathrm{err}}}\underset{\substack{Q_1\in\qty{\mathrm{X}, \mathrm{Y}} \\\ Q_2\in\mathcal{P}_{n-1}}}{\mathbb{E}}\qty[\mathcal{E}_{Q_1\otimes Q_2}] + \frac{p_z}{p_{\mathrm{err}}}\mathcal{E}_{\mathrm{Z}\otimes \mathrm{I}^{\otimes n-1}}).
    \end{aligned}
\end{equation}

When the noise is biased with no Pauli-Z component, i.e., $p_z = 0$, the distance $v$ between $\mathscr{T}_{\mathcal{Q}_{U}}(\mathcal{N})$ and $\mathcal{N}_{\mathrm{wn},p_{\mathrm{err}}}$ is calculated as $v = O(2^{-n})$ up to the leading order.
Therefore, the twirled noise $\mathscr{T}_{\mathcal{Q}_{U}}(\mathcal{N})$ converges exponentially to global white noise.
However, when the noise includes a Pauli-Z component, i.e., $p_z \neq 0$, the leading term of the distance is $p_z/p_{\mathrm{err}}$ (see Table~\ref{tbl_}).
While there is a constant decrease in the distance, symmetric Clifford twirling does not induce exponential decay in this case.
This motivates us to develop methods for implementing the $R_z(\theta)$ gate such that the dominant error is Pauli-X or Y error~\cite{yoshioka2024error}.
Alternatively, we may focus on mitigating Pauli-Z noise using probabilistic error cancellation while employing symmetric Clifford twirling to address residual Pauli-X and Y noise.

It is noteworthy that an $n$-qubit Clifford unitary $D \in \mathcal{G}_{n,\mathcal{Q}}$ can be implemented with negligible effort in terms of (i) measurement shots, (ii) classical computational cost for randomly selecting gates from the symmetric Clifford group $\mathcal{G}_{n,\mathcal{Q}}$, and (iii) additional error.
The first point follows from the general characteristics of randomized compiling~\cite{wallman2016noise}.
Regarding the second point, although the group size scales superexponentially, a polynomial-time sampling algorithm exists~\cite{mitsuhashi2023clifford}.
Moreover, to scramble single-qubit Pauli noise Eq.~\eqref{eq_singleq_pauli_noise} into Eq.~\eqref{eq_global_twirled}, it is not necessary to sample from all elements of $\mathcal{G}_{n, \mathcal{Q}_U}$.
Instead, as shown in Fig.~\ref{fig_SCT}, noise can be probabilistically propagated to $n-1$ idling qubits, followed by local twirling using single-qubit Clifford operations.
This process is achieved by constructing $D\in\mathcal{G}_{n, \mathcal{Q}_U}$ as follows:
\begin{enumerate}
    \item For $i = 2,\dots,n$, add $i$ to the set $\mathbf{Target}$ with probability $3/4$.
    \item Add a multi-qubit CNOT gate, with the first qubit as the control qubit and the qubits in $\mathbf{Target}$ as target qubits.
    \item For qubits in the set $\mathbf{Target}$, apply a randomly and uniformly sampled single-qubit Clifford gate.
    \item With probability $1/2$, apply an S gate to the first qubit.
\end{enumerate}
\red{We note that S gate is defined as $\mathrm{diag}(1,i)$ in the computational basis.}

Regarding the third point, we examine additional error effects for the scenarios depicted in Fig.~\ref{fig_earlyFTQC_circuit}(b) and Fig.~\ref{fig_earlyFTQC_circuit}(c).
Consider inserting additional random Clifford gates $D_l$ and $D^\dag_l$ to twirl the $l$-th non-Clifford gate in the original circuit of Fig.~\ref{fig_earlyFTQC_circuit}(a).
In the setup of Fig.~\ref{fig_earlyFTQC_circuit}(b), before conjugating non-Clifford Pauli rotation gates toward the beginning, we conjugate $D_l$ toward state preparation and $D_l^\dag$ toward measurement.
Then, we merge $D_l$ with state preparation and $D_l^\dag$ with measurements, followed by conjugation of non-Clifford Pauli rotation gates toward the beginning.
This produces a randomized compiled circuit whose output remains equivalent when noise is absent.
Thus, applying symmetric Clifford twirling to non-Clifford gates results in randomized compiling of the entire circuit.

Through randomized compiling, the weights of multi-qubit Pauli rotations and multi-Pauli measurements may change.
However, if the original circuit is sufficiently deep, e.g., depth $\Omega(n)$ for an $n$-qubit simulation, these weights are already on the order of \red{$\Omega(n)$}.
In this regime, the total error is not significantly affected by the addition of Clifford operations $D_l$.
Since the number and weights of non-Clifford operations and final multi-Pauli measurements remain unchanged, the additional overhead is limited to stabilizer state preparation, which is negligible compared to the full circuit (see \red{Supplementary Note~1} for details).

For the setup in Fig.~\ref{fig_earlyFTQC_circuit}(c), additional random Clifford gates $D_l$ are implemented directly via lattice surgery or gate teleportation.
Although this introduces additional logical errors, these errors are much smaller than those of Pauli rotation gates.
This is because controlled multi-qubit Pauli gates can be implemented with $O(1)$ depth via lattice surgery or gate teleportation, whereas Pauli rotation gates require several tens to hundreds of Clifford+T gates for synthesis~\cite{ross2014optimal, yoshioka2024error}.
Considering that T gates are also affected by distillation errors and that Clifford gate errors can be detected during gate teleportation~\cite{delfosse2024low}, errors on additional gates $D_l$ are expected to be hundreds to thousands of times smaller than those on the \red{non-Clifford} gate $U$ per qubit (see \red{Supplementary Note~1} for details).

\subsection*{$k$-Sparse Symmetric Clifford Twirling}

\begin{figure}[t]
    \begin{center}
        \includegraphics[width=0.9\linewidth]{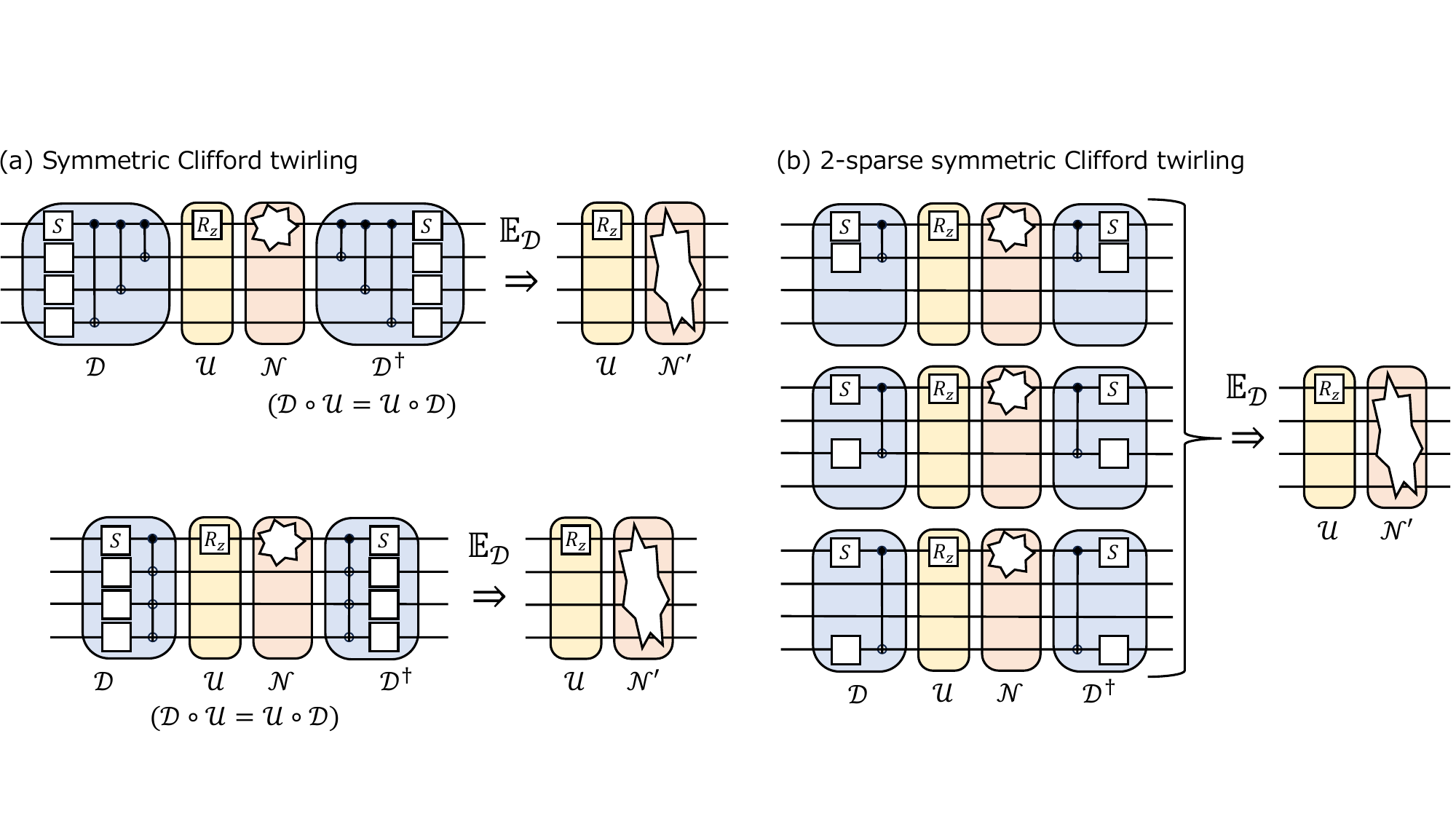}
        \caption{Schematic representation of 2-sparse symmetric Clifford twirling.
        While symmetric Clifford twirling propagates noise to all $n-1$ idling qubits, 2-sparse symmetric Clifford twirling propagates noise only to a single idling qubit, randomly selected for each circuit execution.}
        \label{fig_k-sparse}
    \end{center}
\end{figure}

In the previous section, we \red{argued} that logical errors per qubit on an additional random Clifford operation $D$ are small compared to the \red{non-Clifford} gate $U$ in the setup of Fig.~\ref{fig_earlyFTQC_circuit}(c).
Nevertheless, implementing the Clifford operation $D$ affects $O(n)$ qubits.
Even though such $D$ can be implemented with $O(1)$ depth, its logical error rate is expected to scale as $O(n)$.
Meanwhile, if we can build a simplified method with local Clifford gates that resembles the function of symmetric Clifford twirling, the logical error rate can be expected to be suppressed.

For this purpose, we consider limiting the sampled symmetric Clifford unitaries $D\in\mathcal{G}_{n, \mathcal{Q}_U}$ to at most $k$-qubit unitaries.
In other words, instead of propagating local noise to all $n$ qubits, we simplify the implementation of $D$ by restricting noise propagation to at most $k$ qubits.
As an example of such a strategy, we construct $D$ as follows:
\begin{enumerate}
    \item Sample $k'\in\qty{0,\dots,k-1}$ with probability $3^{k'}\tbinom{n-1}{k'}/\sum_{k''=0}^{k-1}3^{k''}\tbinom{n-1}{k''}$.
    \item Randomly and uniformly select $k'$ qubits from the idling $n-1$ qubits and add them to the set $\mathbf{Target}$.
    \item Apply a multi-qubit CNOT gate with the first qubit as the control qubit and the qubits in $\mathbf{Target}$ as target qubits.
    \item For qubits in $\mathbf{Target}$, apply a randomly and uniformly sampled single-qubit Clifford gate.
    \item With probability $1/2$, apply an S gate to the first qubit.
\end{enumerate}

Step 1 determines the number of qubits to which the noise propagates, where $\binom{n-1}{k'}$ is the binomial coefficient and $3^{k'}\binom{n-1}{k'}$ represents the number of $(n-1)$-qubit Pauli operators with Pauli weight $k'$.
The Pauli weight refers to the number of qubits on which a Pauli operator acts nontrivially.
Step 2 selects the $k'$ qubits that will receive the propagated noise, while step 3 propagates the noise to these qubits.
Finally, steps 4 and 5 locally scramble the noise.

Since the constructed $D\in\mathcal{G}_{n, \mathcal{Q}_U}$ affects at most $k$ qubits, we refer to twirling using such $D$ as \textit{$k$-sparse symmetric Clifford twirling} (see Fig.~\ref{fig_k-sparse} for $k=2$).
This method scrambles noise into the uniform average of Pauli noise with Pauli weights up to $k-1$.
Thus, the effect of $k$-sparse symmetric Clifford twirling for $P=\mathrm{X}, \mathrm{Y}$ can be mathematically described as:
\begin{equation}
    \begin{aligned}
        \mathscr{T}_{\mathcal{Q}_{U}}^{k\text{-sparse}} (\mathcal{E}_{P\otimes \mathrm{I}^{\otimes{n-1}}})
    &=\mathbb{E}_{D\in\mathcal{G}_{n, \mathcal{Q}_U}}^{k\text{-sparse}}[\mathcal{D}^\dagger\circ\mathcal{E}_{P\otimes \mathrm{I}^{\otimes{n-1}}}\circ\mathcal{D}]\\
    &= \underset{\substack{Q_1\in\qty{\mathrm{X}, \mathrm{Y}} \\ Q_2\in\mathcal{P}_{n-1}, w(Q_2)\leq k-1}}{\mathbb{E}}\qty[\mathcal{E}_{Q_1\otimes Q_2}],
    \end{aligned}
\end{equation}
where $\mathbb{E}_{D\in\mathcal{G}_{n, \mathcal{Q}_U}}^{k\text{-sparse}}$ represents the average over the above construction of $D$, and $w(Q_2)$ denotes the Pauli weight of $Q_2$.
Therefore, applying $k$-sparse symmetric Clifford twirling to single-qubit Pauli noise in Eq.~\eqref{eq_singleq_pauli_noise} results in a twirled noise $\mathscr{T}_{\mathcal{Q}_{U}}^{k\text{-sparse}}(\mathcal{N})$ given by
\begin{equation}
    \begin{aligned}
        &~~(1-p_{\mathrm{err}})\mathcal{I} \\
        &+ p_{\mathrm{err}}\qty(\frac{p_x+p_y}{p_{\mathrm{err}}}\underset{\substack{Q_1\in\qty{\mathrm{X}, \mathrm{Y}} \\ Q_2\in\mathcal{P}_{n-1} \\ w(Q_2)\leq k-1}}{\mathbb{E}}\qty[\mathcal{E}_{Q_1\otimes Q_2}] + \frac{p_z}{p_{\mathrm{err}}}\mathcal{E}_{\mathrm{Z}\otimes \mathrm{I}^{\otimes n-1}}).
    \end{aligned}
\end{equation}

When the noise is biased with no Pauli-Z component, i.e., $p_z = 0$, the resulting noise channel is a uniform average over $2\sum_{k''=0}^{k-1}3^{k''}\tbinom{n-1}{k''} \sim 2(3n)^{k-1}/(k-1)!$ types of Pauli noise.
Thus, the distance $v$ between $\mathscr{T}_{\mathcal{Q}_{U}}(\mathcal{N})$ and $\mathcal{N}_{\mathrm{wn},p_{\mathrm{err}}}$ is calculated as $v=O(n^{-(k-1)/2})$ up to the leading order.
This implies that although exponential convergence is unattainable unlike full symmetric Clifford twirling, Pauli-X and Y noise can be scrambled polynomially close to global white noise.
Moreover, when the noise includes a Pauli-Z component, i.e., $p_z \neq 0$, the leading term of the distance is $p_z/p_{\mathrm{err}}$, which is the same as the full twirling (see Table~\ref{tbl_}).
These results demonstrate the effectiveness of the simplified $k$-sparse twirling approach, which only requires $k$-qubit symmetric Clifford operators.

\subsection*{Numerical Analysis}
\begin{figure*}[t]
    \begin{center}
        \resizebox{0.99\hsize}{!}{\includegraphics{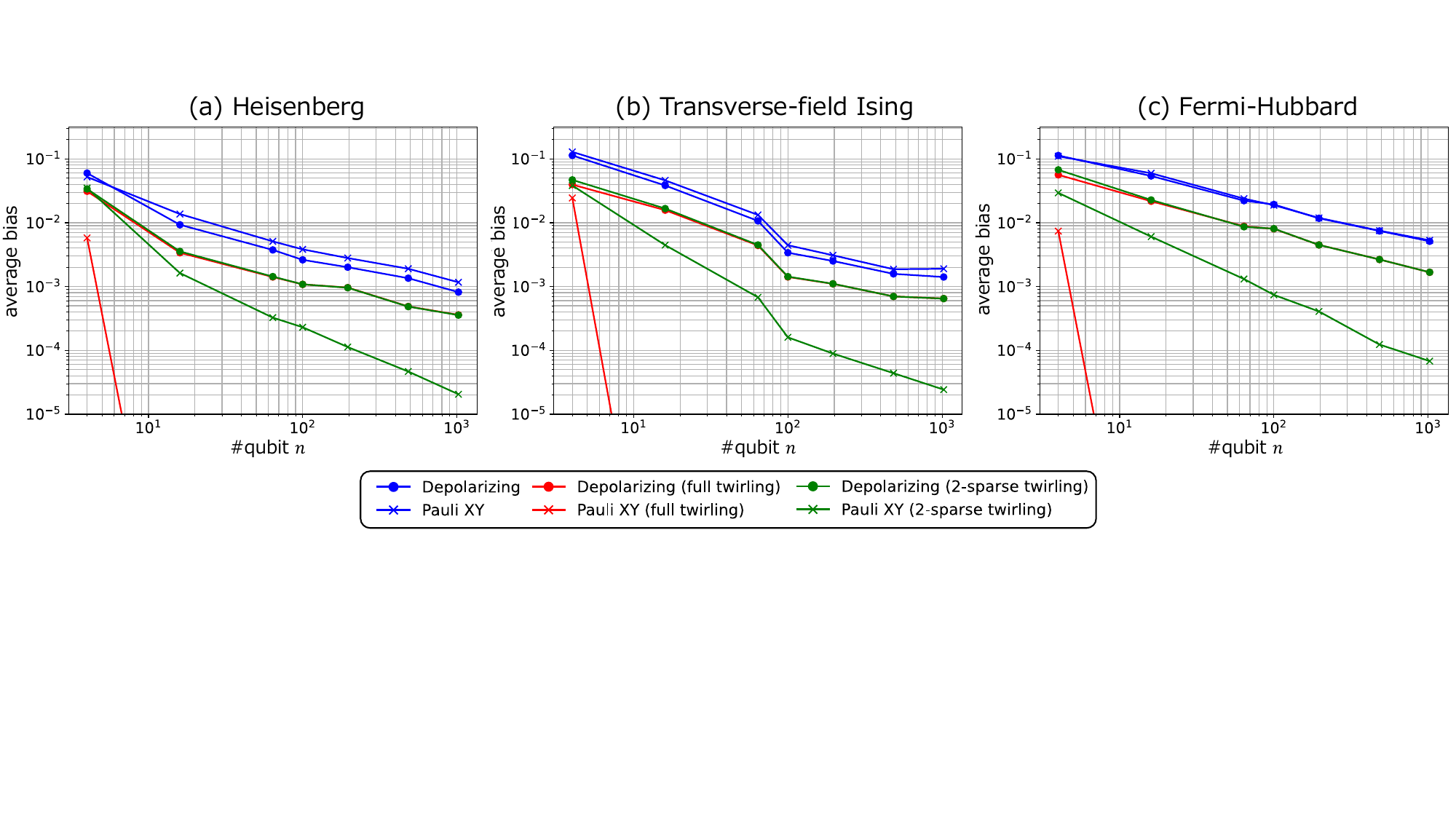}}
        \caption{Qubit count dependence of the performance of symmetric Clifford twirling for Trotterized Hamiltonian simulation of (a): 2D Heisenberg model, (b): 2D transverse-field Ising model, and (c): 2D Fermi-Hubbard model. These \red{graphs} represent the average bias over random Pauli operators $P\in \mathcal{P}_n$ defined as $\abs{R2^{-n}\abs{\mathrm{tr}[\mathcal{N}_{\mathrm{eff}}(P)P]} - 1}$, where $\mathcal{N}_{\mathrm{eff}}$ is the effective noise channel of the entire logical circuit and $R$ is the optimal rescaling coefficient defined in \red{Supplementary Note~6}.
        We fix the total error rate as $p_{\mathrm{tot}} := p_{\mathrm{err}}L = 1$ and set the Trotter step size as 100. The circle and the x dots respectively represent the results for depolarizing noise and Pauli-X and Y noise, while the \red{blue, red, and green} lines represent the results without symmetric Clifford twirling, with symmetric Clifford twirling, and with 2-sparse symmetric Clifford twirling. Here we represent the result for Clifford simulation, where all the angles of Pauli rotation are taken as $\theta = \pi/4$.
        }
        \label{fig_numerics}
    \end{center}
\end{figure*}
\begin{figure*}[t]
    \begin{center}
        \resizebox{0.99\hsize}{!}{\includegraphics{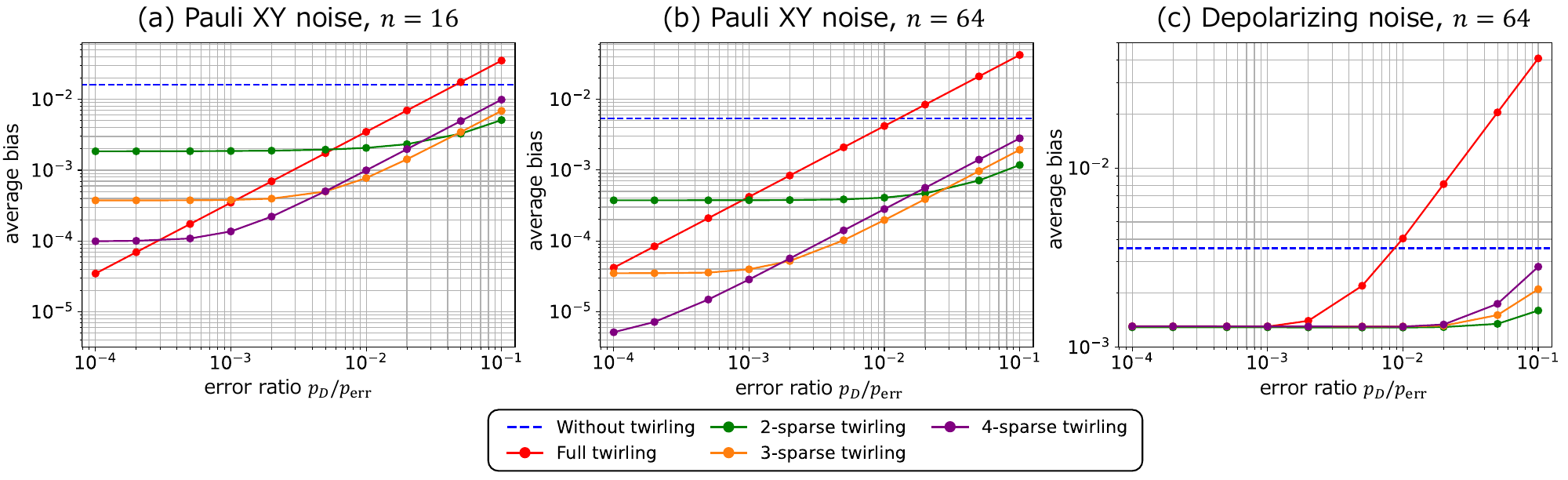}}
        \caption{Performance of symmetric Clifford twirling in the presence of noise on the additional symmetric Clifford operations for twirling.
        We assume that twirling operations are subject to local depolarizing noise with error rate $p_D$.
        These \red{graphs} represent the average bias over random Pauli operators $P\in \mathcal{P}_n$ defined as $\abs{R2^{-n}\abs{\mathrm{tr}[\mathcal{N}_{\mathrm{eff}}(P)P]} - 1}$, where $\mathcal{N}_{\mathrm{eff}}$ is the effective noise channel of the entire logical circuit and $R$ is the optimal rescaling coefficient defined in \red{Supplementary Note~6}.
        The circuit considered is the $n$-qubit Trotterized Hamiltonian simulation circuit of the 2D Heisenberg model with Trotter step size 100 and total error rate $p_{\mathrm{tot}} := p_{\mathrm{err}}L = 1$.
        Panels (a), (b), and (c) represent results for $n=16$ qubits with a Pauli-Z rotation gate affected by Pauli-X and Y noise, $n=64$ qubits with Pauli-X and Y noise, and $n=64$ qubits with depolarizing noise, respectively.
        Here, we present results for Clifford simulation, where all the angles of Pauli rotations are set to $\theta = \pi/4$.}
        \label{fig_numerics_noisy}
    \end{center}
\end{figure*}
We demonstrate the impact of ($k$-sparse) symmetric Clifford twirling in mitigating errors via the dynamics simulation circuit of the first-order Suzuki-Trotter decomposition for the 2D Heisenberg model, the 2D transverse-field Ising model, and the 2D Fermi-Hubbard model under open boundary conditions.
In this setup, we assume that the noisy non-Clifford layer is given as $\mathcal{N} \circ \mathcal{U}$ instead of the noiseless $\mathcal{U}(\cdot) = U\cdot U^\dag$, where $U = e^{i\theta \mathrm{Z}\otimes \mathrm{I}^{\otimes n-1}}$ represents a Pauli-Z rotation gate.
We sandwich this non-Clifford layer with Clifford operations to implement the target Pauli rotation.
As for the noise, we assume that $\mathcal{N}$ is a single-qubit Pauli error as in Eq.~\eqref{eq_singleq_pauli_noise}.
We specifically consider Pauli-X and Y noise with $p_x = p_y = p_{\mathrm{err}}/2$, or depolarizing noise with $p_x = p_y = p_z = p_{\mathrm{err}}/3$.
We set the total error rate to $p_{\mathrm{tot}} = 1$, resulting in a constant sampling overhead of $e^2 \approx 7$.

To demonstrate the effectiveness of symmetric Clifford twirling in large-scale quantum circuits, we replace the $R_z(\theta)$ gate with \red{$R_z(\pi/4)$ gate, corresponding to $S$ gate up to phase}, and perform Clifford simulations, as shown in Fig.~\ref{fig_numerics}.
We guide the readers to \red{Supplementary Note~6} for the non-Clifford state simulation, where the similarity between Clifford and non-Clifford simulation is discussed.
Figure~\ref{fig_numerics} represents the average bias over random Pauli operators $P\in \mathcal{P}_n$ defined as $\abs{R2^{-n}\abs{\mathrm{tr}[\mathcal{N}_{\mathrm{eff}}(P)P]} - 1}$, where $\mathcal{N}_{\mathrm{eff}}$ is the effective noise channel of the entire logical circuit and $R$ is the optimal rescaling coefficient (see \red{Supplementary Note~6} for details).
We observe that symmetric Clifford twirling significantly reduces the bias for Pauli-X and Y noise, corroborating Theorem~\ref{thm_1}, which states that such noise can be twirled into an exponentially close approximation of global white noise.
This result motivates us to develop methods for synthesizing the $R_z(\theta)$ gate such that the dominant error consists of Pauli-X or Y noise.
Such gate compilation is achievable probabilistically in certain scenarios~\cite{yoshioka2024error}, although the general feasibility of this approach remains an open question.

Additionally, we observe that the average bias between the ideal and error-mitigated expectation values decreases approximately as $1/\sqrt{n}$ as the qubit count increases, even without applying symmetric Clifford twirling.
This effect is consistent with the white-noise approximation~\cite{dalzell2021random}, where the effective noise of the quantum circuit is scrambled into global white noise on average.
In \red{Supplementary Note~6}, we show that when each Clifford layer $C_l$ is randomly sampled, the bias scales as
\begin{equation}
    \abs{R2^{-n}\abs{\mathrm{tr}[\mathcal{N}_{\mathrm{eff}}(P)P] - 1}} \sim \frac{vp_{\mathrm{tot}}}{\sqrt{n}}.
\end{equation}
Our numerical results demonstrate that this scaling also holds for Trotterized Hamiltonian simulation circuits, which is not random but highly structured circuit.

As observed in our numerical simulations, symmetric Clifford twirling accelerates noise scrambling in the white-noise approximation.
This remains true even for depolarizing noise containing Pauli-Z components, as $v$ decreases from $\sqrt{p_x^2+p_y^2+p_z^2}/p_{\mathrm{err}} = 1/\sqrt{3}$ to $p_z/p_{\mathrm{err}} = 1/3$.
Moreover, noise scrambling can be further accelerated by applying sparse twirling techniques such as $2$-sparse twirling, which uses only a single CNOT gate.
In the absence of Pauli-Z error, $2$-sparse twirling reduces the distance from $v=O(1)$ to $v = O(1/\sqrt{n})$, improving the bias scaling from $1/\sqrt{n}$ to $1/n$.
We emphasize that the difference in the performance of the full twirling and the sparse twirling is negligible when the Pauli-Z noise remains untwirled, as we can see from the results of depolarizing noise.
This is because the leading term of the distance $v$ remains the same for both methods.

Next, we evaluate the efficacy of our methods in the presence of noise on the additional symmetric Clifford operation $D$ used for twirling.
We assume that qubits on which $D$ acts nontrivially are affected by local depolarizing noise with an error rate of $p_D$.
We analyze the average bias for varying ratios of the Clifford operation error rate $p_D$ to the Pauli rotation gate error rate $p_{\mathrm{err}}$, as shown in Fig.~\ref{fig_numerics_noisy}.

As we discussed in previous sections, we expect the error ratio $p_D/p_{\mathrm{err}}$ to be as small as $10^{-2}$ to $10^{-3}$.
In these regimes, we observe that even full symmetric Clifford twirling reduces the bias.
However, the performance of full twirling deteriorates as the qubit count $n$ increases, since the effect of twirling errors becomes more significant.

On the other hand, for $k$-sparse twirling, the noise affects only up to $k$ qubits per twirling operation.
As a result, the impact of the noise does not depend on the qubit count $n$, making its performance superior to full twirling in such cases.
Additionally, we observe a reduction in bias as $n$ increases, similar to the noiseless case.
Because of this, we find that the optimal sparsity $k$ varies depending on the qubit count $n$ and the error ratio $p_D/p_{\mathrm{err}}$.
For depolarizing noise, however, we find that $2$-sparse twirling consistently performs best.
This is because the leading term of the distance $v$ remains independent of sparsity, making it preferable to choose a twirling method that introduces minimal additional noise.

\red{\section*{Discussion}}
In this work, we have introduced symmetric Clifford twirling, which converts local noise into noise resembling global white noise.
Additionally, we show that the effective noise of deep logical circuits, \red{even if they are} highly structured ones, can be regarded as global white noise on average.
These findings pave the way for mitigating logical errors in non-Clifford operations with minimal sampling overhead, or implementing early FTQC algorithms robust to global white noise~\cite{o2019quantum, katabarwa2024early, ding2023robust, dutkiewicz2024error}.

We foresee various future directions, highlighting the top two.
The first is to develop a way to perform non-Clifford unitary with a noise that can be scrambled through symmetric Clifford twirling.
For example, the common noise model of T gates executed via gate teleportation is known to be Pauli-Z noise~\cite{piveteau2021error, lostaglio2021error}, which cannot be twirled.
The development of a novel approach to perform T gates with Pauli X or Y errors represents a crucial avenue for future research.
\red{One possible way is to use code switching to a code where T gates can be implemented transversely~\cite{anderson2014fault, beverland2021cost}, where Pauli-Z error may not be a dominant component in the noise.}

The second is to develop a further application of symmetric Clifford twirling.
While our focus in this work has been on cost-optimal QEM, conventional Clifford twirling is also employed in various contexts ranging from fidelity estimation~\cite{emerson2005scalable, dankert2009exact} to the analysis of information loss in black holes~\cite{hayden2007black}.
Investigating the practical utility of symmetric Clifford twirling for such important tasks in quantum information theory, high energy physics, and many-body physics is left as an interesting future work.

\red{\section*{Data availability}
All study data are included in this article and \red{Supplementary Note}.
\section*{Code availability}
Codes are available upon request.}

\section*{Acknowledgement}
The authors wish to thank Gregory Boyd, Anthony Chen, Suguru Endo, Soumik Ghosh, Christophe Piveteau, Yihui Quek, Takahiro Sagawa, Yasunari Suzuki, Kristan Temme, Ewout van den Berg, and Yat Wong for fruitful discussions.
K.T. and Y.M. are supported by World-leading Innovative Graduate Study Program for Materials Research, Information, and Technology (MERIT-WINGS) of the University of Tokyo.
K.T. is also supported by JSPS KAKENHI Grant No. JP24KJ0857 and JST BOOST Grant Number JPMJBS2418.
Y.M. is also supported by JSPS KAKENHI Grant No. JP23KJ0421.
N.Y. wishes to thank 
JST PRESTO No. JPMJPR2119, 
JST Grant Number JPMJPF2221, 
JST CREST Grant Number JPMJCR23I4,
    IBM Quantum, 
 JST ERATO Grant Number JPMJER2302, 
 and JST ASPIRE Grant Number JPMJAP2316. 

\red{\section*{Author contributions}
K.T. and N.Y. conceived the project and designed the research with inputs from all authors. K.T., Y.M., and N.Y. developed the theoretical framework and performed the analytical derivations. K.T. and N.Y. implemented the numerical simulations. The manuscript was written by K.T, K.S., and N.Y. with contributions from all authors. All authors discussed the results and reviewed the manuscript.}

\red{\section*{Competing Interests}
The authors declare no competing interests.}

\let\oldaddcontentsline\addcontentsline
\renewcommand{\addcontentsline}[3]{}
\bibliography{bib.bib}
\let\addcontentsline\oldaddcontentsline
\onecolumngrid

\clearpage
\begin{center}
	\Large
	\textbf{\red{Supplementary Note} for: Symmetric Clifford twirling for cost-optimal quantum error mitigation in early FTQC regime}
\end{center}

\setcounter{section}{0}
\setcounter{equation}{0}
\setcounter{figure}{0}
\setcounter{table}{0}
\setcounter{thm}{0}
\renewcommand{\thesection}{S\arabic{section}}
\renewcommand{\theequation}{S\arabic{equation}}
\renewcommand{\thefigure}{S\arabic{figure}}
\renewcommand{\thetable}{S\arabic{table}}
\renewcommand{\thethm}{S\arabic{thm}}
\renewcommand{\thecor}{S\arabic{cor}}

\addtocontents{toc}{\protect\setcounter{tocdepth}{0}}

\section{Scheme of early fault-tolerant quantum computing}
\label{sec_S_earlyFTQC}
As we have mentioned in the main text, we envision that, in the early FTQC era, the logical error rate is significantly suppressed by quantum error correction so that we may implement deep quantum circuits for quantum information processing tasks such as Hamiltonian simulation and quantum phase estimation.
The question is the condition required to perform symmetric Clifford twirling, i.e., whether we can insert twirling operations without significantly affecting the total error budget.
As discussed in the main text, the discussion relies on how abundant the magic state supply is.
In the following, we discuss the cases of single-thread and multi-thread supply in Secs.~\ref{subsec:single-thread} and~\ref{subsec:multi-thread}, respectively.

\subsection{Single-thread regime}\label{subsec:single-thread}
In this subsection, we argue that, if the supply of magic states is scarce such that non-Clifford operations can be executed only one by one, additional Clifford operations for symmetric Clifford twirling affect the error only in a depth-independent manner,  in particular when the circuit depth is of $\Omega(n)$.

Consider the well-known compilation scheme proposed by Litinski~\cite{litinski2019game}, which follows rules shown in Fig.~\ref{fig_rule} to obtain a circuit as depicted in Fig.~\ref{fig_conjugation}(a).
Namely, all the non-Clifford Pauli rotations are conjugated towards the beginning of the circuit, and the Clifford operations are conjugated so that it can be merged with the measurement.
When the circuit is sufficiently deep, e.g. $\Omega(n)$ depth for $n$-qubit simulation, then most of the multi Pauli rotations and multi Pauli measurements involve \red{$\Omega(n)$} logical qubits with high probability.
In this regime, the amount of total error is no longer affected by adding additional Clifford operations.

An important caveat is that arbitrary symmetric twirling operations reduce to identity under Litinski's compilation. To overcome this issue, we introduce a generalized compilation scheme that allows nontrivial stabilizer state preparation. Concretely, as depicted in Fig.~\ref{fig_conjugation}(b), we first insert twirling operations $\mathcal{D}_l, \mathcal{D}_l^\dag$ for the $l$-th non-Clifford layer $\mathcal{U}_l$. Then, by conjugating $\mathcal{D}_l$ towards the state preparation and $\mathcal{D}_l^\dag$ towards the measurement, we finally obtain the compiled form. 
Since the number and weight of non-Clifford operations and final multi-Pauli measurements is not changed, the additional overhead is only the stabilizer state preparation.

 \begin{figure}[t]
    \begin{center}
        \resizebox{0.75\hsize}{!}
    {\includegraphics{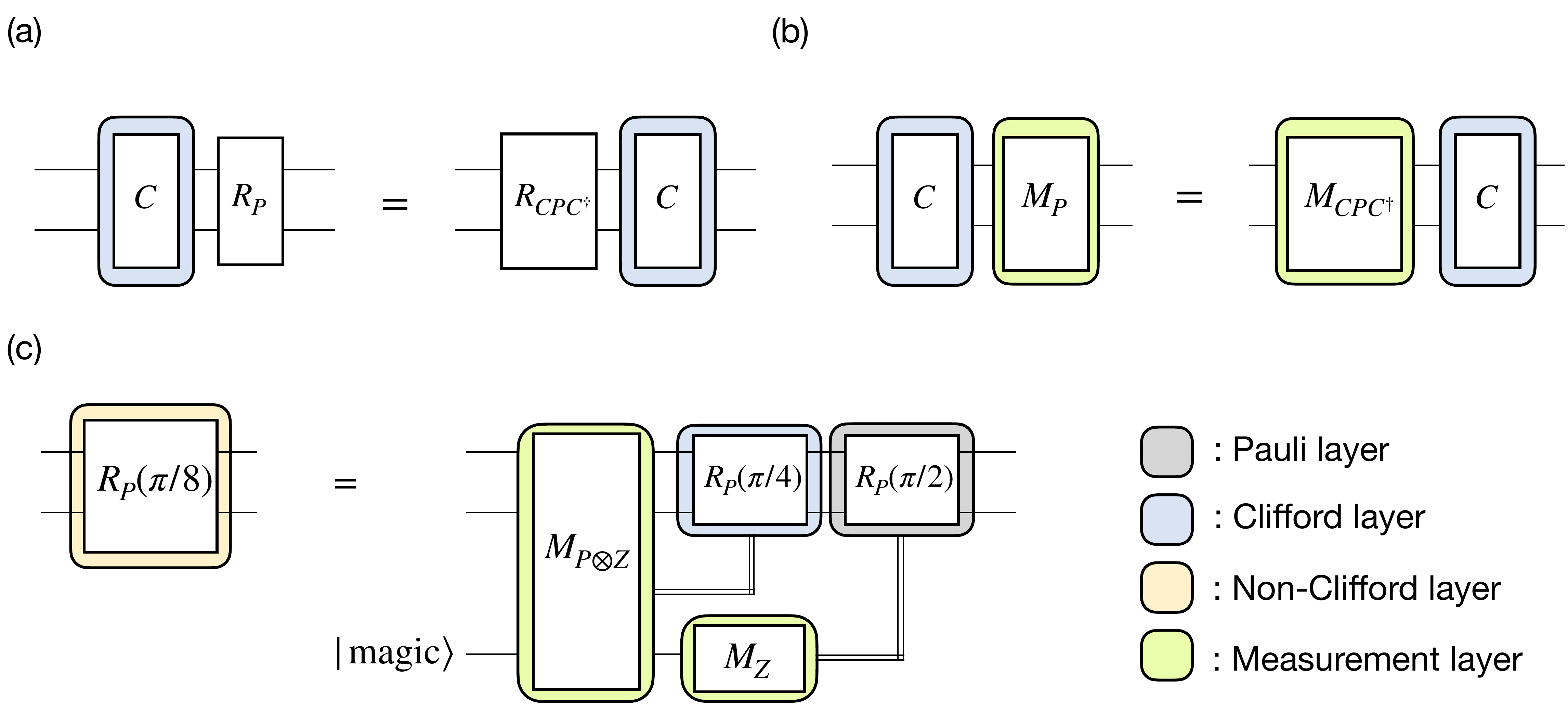}}
        \caption{Gate conjugation rule discussed in Ref.~\cite{litinski2019game}. (a) Commutation rule of Pauli rotation gates. Here, $R_P$ is the rotation gate regarding Pauli operator $P$, and $C$ is a Clifford gate.
        (b) Commutation rule of Pauli measurement. (c) Implementation of $\pi/8$ rotation of Pauli operator $P$ via gate teleportation of $|{\rm magic}\rangle = |0\rangle + e^{i \pi/4}|1\rangle.$ Note that $| {\rm magic}\rangle$ does not rely on $R_P$ as long as the rotation angle is $\pi/8$, which simplifies the magic state distillation protocol.
        }
        \label{fig_rule}
    \end{center}
    \begin{center}
        \resizebox{0.85\hsize}{!}{\includegraphics{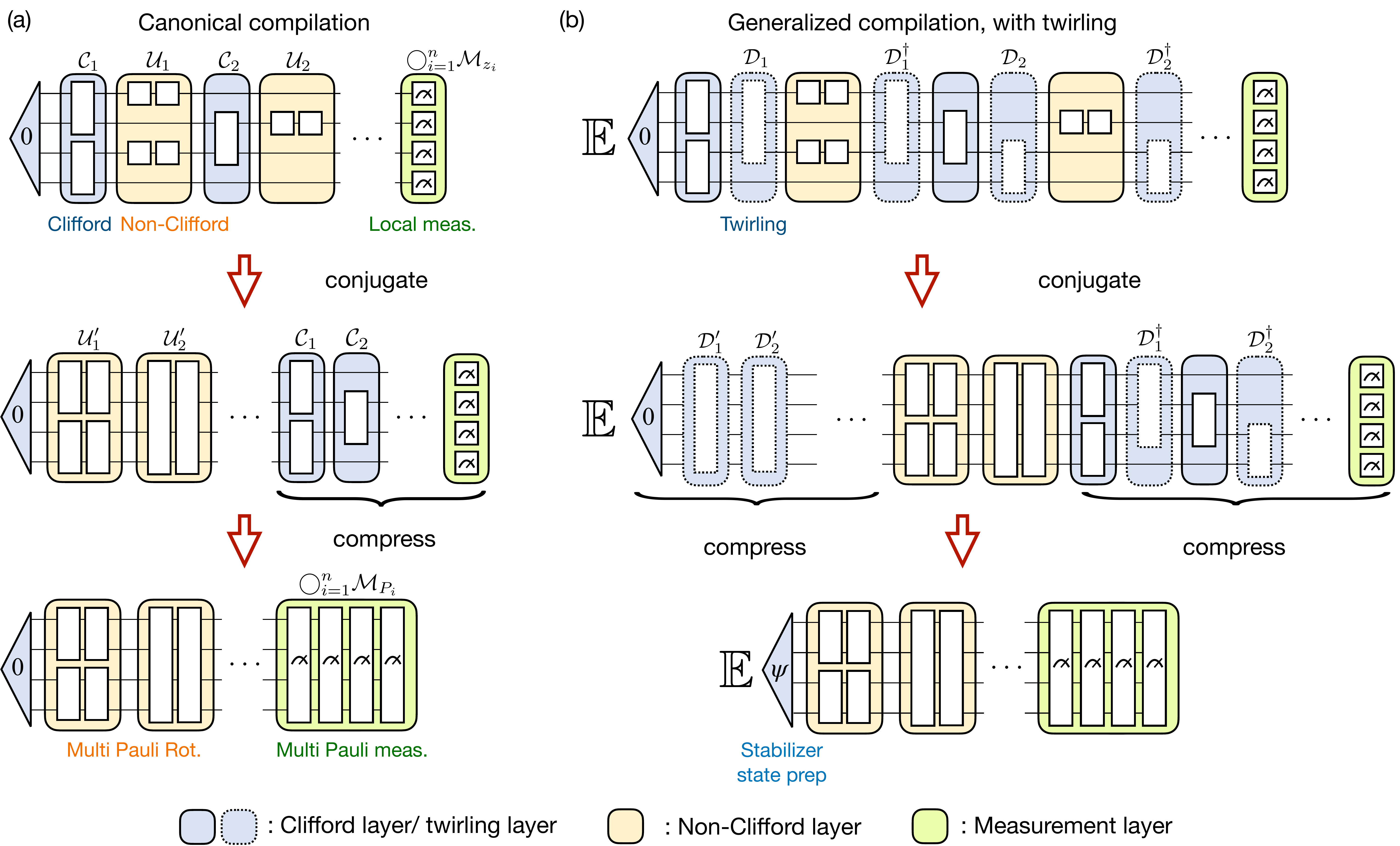}}
        \caption{(a) Canonical compilation by Litinski~\cite{litinski2019game} for untwirled circuit and (b) generalized compilation for twirled quantum circuits. We assume FTQC with lattice surgery and magic state gate teleportation. In the first stage, we conjugate the non-Clifford Pauli rotations through the Clifford layers so that all non-Clifford layers are placed before the Clifford layers. In the second stage, the Clifford layers are absorbed into the stabilizer state preparation (denoted as $\psi$ in (b)) and Pauli measurements. The average $\mathbb{E}$ is taken with respect to the twirling operations.
        }
        \label{fig_conjugation}
    \end{center}
\end{figure}

One naive option for stabilizer state preparation is to perform $n$ steps of Pauli measurement and feedback regarding the stabilizer generators. This introduces an error that scales linearly with $n$.
On the other hand, if $O(n)$ ancilla qubits are available (which is a common assumption in many FTQC architectures~\cite{fowler2018low, gidney2021factor, yoshioka2024hunting}), one may alternatively consider the method proposed by Zheng {\it et al.} that implements Clifford circuits with $O(1)$ depth using gate teleportation~\cite{zheng2018depth}. 
The key idea of achieving the constant depth is to employ the 9-stage decomposition of Clifford circuits, -C-P-C-P-H-P-C-P-C-, where -C-, -P-, and -H- respectively stand for stages consisting only of CNOT gates, phase gates, and Hadamard gates. 
For the sake of stabilizer state preparation, we may fix the initial state to be $|0^n\rangle$, which allows us to consider gate teleportation of only the latter 5 stages, -H-P-C-P-C-. 
Since C, P, and H stages require three, two, and two steps of Pauli measurements with at most two $2n$-qubit CSS states, one $4n$-qubit CSS state, and one $4n$-qubit CSS state, the entire process can be completed by 12 steps of Pauli measurements that consumes three $4n$-qubit stabilizer states and four $2n$-qubit stabilizer states in total. 
This is independent of the number of non-Clifford layers in the original circuit; the circuit volume is not significantly affected, and thus the contribution to the total error can be expected to be small.

Two remarks are in order. 
First, while the preparation of CSS states naively does not seem to be advantageous, it has been shown that such states can be distilled with constant rate~\cite{lai2017fault}, and thus the contribution to the total error can be efficiently reduced.
Second, one may reduce the ancilla count by considering the transversal implementation of some stages or sacrificing the depth by dividing a single step into smaller units.

\subsection{Multi-thread regime}\label{subsec:multi-thread}
Now we discuss the case when the number of magic state factories is sufficient to allow a multi-thread supply of magic. To understand when non-Clifford dominate the total error budget, we decompose the main logical errors into a sum of three terms: decoding error, distillation error, and gate synthesis error.

The decoding error indicates the error due to the limitation of the error correction code and decoding algorithm. When we employ an error correction code of distance $d$, no matter how good the decoding algorithm is, we inevitably have logical error of $p_{\rm dec}=O((p/p_{\rm th})^{\frac{d+1}{2}})$  where $p$ is the physical error rate and $p_{\rm th}$ is the threshold of the code. For instance, in the case of surface code, it is commonly estimated that $p_{\rm dec}=0.1(p/p_{\rm th})^{\frac{d+1}{2}}$ with $p_{\rm th}=0.01$~\cite{litinski2019game, lee2021even, yoshioka2024hunting} and therefore we have $p_{\rm dec} = 10^{-8}$ for $p=10^{-3}$ and $d=13$, for instance.
When we perform a multi-Pauli measurement that involves $k$ logical qubits, the logical error is multiplied by a factor of $k$. Therefore, if the circuit volume is given by $V$, the total decoding error is given as
\begin{eqnarray}
    N_{\rm dec} = \red{d}p_{\rm dec} V, \label{eq:n_dec}
\end{eqnarray}
\red{where $d$ is multiplied to reflect the fact that stabilizer measurement is repeated for $d$ rounds in many quantum error correcting codes including surface code.}

The distillation error is rooted in the insufficiency of the magic state distillation.
For instance, in the context of preparing the magic state for T gates, it is common to utilize the 15-to-1 protocol that is essentially an error detection using Reed-Muller code that allows transversal implementation of the T gate~\cite{bravyi2005universal}.
Reflecting the fact that the code is a $\llbracket 15,1,3\rrbracket$ code, the protocol suppresses the error up to the cubic order and outputs a single magic state of error rate $35p^3$, using 15 noisy magic states of error rate $p.$ 
In general, by choosing $\llbracket n',k',d' \rrbracket$ codes that allows transversal implementation of T gates, one can suppress the distillation error up to $d'$th order. 
The total of the distillation error can be written as
\begin{eqnarray}
    N_{\rm dis} = p_{\rm dis}n_T,\label{eq:n_dis}
\end{eqnarray}
where $p_{\rm dis}$ is the error rate of the distilled magic state and $n_T$ is the number of $\pi/8$ rotations (or T count) in the circuit.
If multiple types of magic states are employed, we can simply sum over the error from each of them.

Finally, the synthesis error indicates the coherent error that arises when we decompose a given arbitrary gate into a sequence of fault-tolerant gates that only constitute a discrete set. For instance, when one employs the Clifford+T gate set, a Pauli rotation gate is typically decomposed into a sequence that consists of H, S, and T gates.
The gate complexity to approximate the target unitary up to the accuracy of $\epsilon$ is $O({\rm polylog}(1/\epsilon)),$ which is considered to be non-negligible when the number of T gates are limited as in early FTQC era.
However, the synthesis error is distinct from other two errors in a sense that it can be characterized completely without quantum measurement.
This has led to several proposals that suppress the synthesis error up to quadratic or cubic order without increasing circuit depth at all~\cite{hastings2016turning, campbell_shorter_2017, yoshioka2024error}.
For the sake of the discussion below, let us define the total contribution from synthesis error as $N_{\rm syn}.$ In the case when the quantum circuit consists of Clifford gates and Pauli rotations, we can rewrite it as 
\begin{eqnarray}
N_{\rm syn} = p_{\rm rot}N_{\rm rot},    
\end{eqnarray}
 where $p_{\rm rot}$ is the synthesis error per gate and $N_{\rm rot}$ is the number of rotation gates.

With all the contributions taken into account, the expected number of errors in the quantum circuit per single run can be written as
\begin{eqnarray}
    N_{\rm err} = N_{\rm dec} + N_{\rm dis} + N_{\rm syn}.
\end{eqnarray}
When we run the quantum circuit, we design the entire protocol so that  $N_{\rm err} = O(1)$, while we can arbitrarily choose how each error contributes to $N_{\rm err}.$
In this work, we consider a situation where the number of physical qubits is limited to tens of thousands to hundreds of thousands.
In such a situation, we may allow increasing the code distance $d$ for each logical qubit, while we do not want to blow up the number of logical qubits. Since the number of data and ancilla logical qubits is typically determined from the problem itself, we shall aim to reduce the number of logical qubits for magic state factories.
For instance, we may use the zero-level distillation that yields an error of $100 p^2$ by using only a single logical qubit~\cite{itogawa2024even}, or may choose one round of magic state distillation that yields $p_{\rm dis}=O(p^2).$ In the case of $p=10^{-3},$ we typically have $p_{\rm dis}=10^{-4}$ or $10^{-5}.$ This is orders of magnitudes higher than the decoding error assuming $d=11$ or $d=13$, and then the total error is dominated by the non-Clifford gates.

Regarding the T states, a recent proposal has shown that the circuit volume for magic state preparation has been significantly improved to be comparable with CNOT gates~\cite{gidney2024magic}.
However, if the non-Clifford operations consist of heavily-magic-consuming gates such as the Pauli rotation, the error from non-Clifford operations dominates the total error budget.
In such a case, low-cost substitutes of full symmetric twirling, such as the $k$-sparse twirling with $k=O(1)$, introduce a negligible amount of error.
It is noteworthy that such a setup includes a wealth of quantum algorithms such as the Trotterized Hamiltonian simulation (both time-independent~\cite{childs2018toward} and time-dependent~\cite{mizuta2024explicit}), adiabatic state preparation~\cite{aspuru2005simulated}, and quantum phase estimation~\cite{kivlichan2020improved, campbell_shorter_2017}.

\section{Comparison of the sampling overhead of QEM methods}
\label{sec_S_QEMcost}
\begin{figure}[t]
    \begin{center}
        \includegraphics[width=0.5\linewidth]{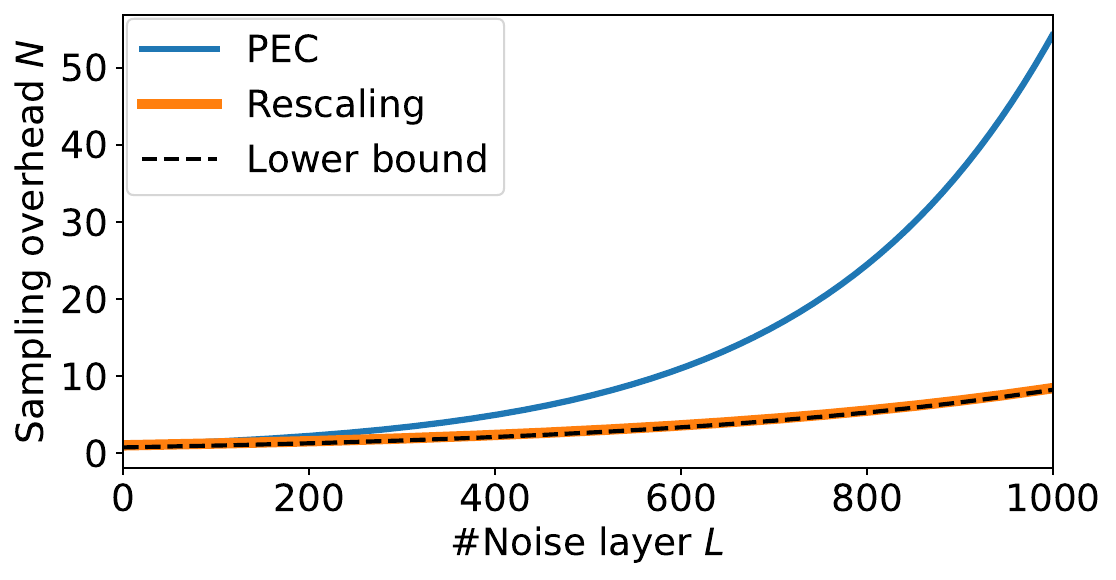}
        \caption{Comparison of the sampling overhead $N$ for probabilistic error cancellation~\cite{piveteau2021error, lostaglio2021error, suzuki2022quantum} (blue line) and the rescaling method (orange line) when $p_{\mathrm{err}} = 10^{-3}$ and $n\gg 1$. The sampling overhead of the rescaling method is not only quadratically improved from probabilistic error cancellation, but also saturates the lower bound proposed in Ref. \cite{tsubouchi2023universal} (dotted line).
        }
        \label{fig_cost}
    \end{center}
\end{figure}

In this section, we discuss the sampling overhead of some quantum error mitigation (QEM) methods and compare them with the theoretical lower bound proposed in Ref.~\cite{tsubouchi2023universal}.

Let us focus on the scenario where we aim to estimate the expectation value of an observable $O$ for the $n$-qubit 
noiseless quantum state
\begin{equation}
    \rho_{\mathrm{ideal}} := \mathcal{C}_{L+1}\circ\mathcal{U}\circ\mathcal{C}_{L}\circ \cdots \circ\mathcal{U}\circ\mathcal{C}_{1}(\rho_0),
\end{equation}
but we only have access to the noisy quantum state
\begin{equation}
    \rho_{\mathrm{noisy}} := \mathcal{C}_{L+1}\circ\mathcal{N}\circ\mathcal{U}\circ\mathcal{C}_{L}\circ \cdots \circ\mathcal{N}\circ\mathcal{U}\circ\mathcal{C}_{1}(\rho_0),
\end{equation}
where $\mathcal{C}_l(\cdot) := C_l \cdot C_l^\dagger$ is the $l$-th Clifford layer, $\mathcal{U}(\cdot) := U \cdot U^\dagger$ is the non-Clifford layer, $\mathcal{N}$ is the noise layer, $\rho_0$ is the initial state, and $L$ represents the number of the noisy non-Clifford layers. Note that this expression can be obtained by assuming that (i) noise in Clifford operation is negligible compared to non-Clifford operations, (ii) all non-Clifford operations are identical up to Clifford conjugation.
Here, for simplicity, we assume that $O$ is a traceless observable satisfying $-I\leq O \leq I$ and $\mathcal{N}$ is a Pauli noise given by $\mathcal{N} := (1-p_{\mathrm{err}})\mathcal{I} + \sum_i p_i\mathcal{E}_{P_i}$, where Pauli error $\mathcal{E}_{P_i}(\cdot) := P_i\cdot P_i$ occurs with probability $p_i$ with $p_{\mathrm{err}} := \sum_i p_i$ being the error probability.

By utilizing QEM techniques, we can construct an estimator for $\mathrm{tr}[\rho_{\mathrm{ideal}}O]$ solely from multiple executions of the noisy logical circuit (with some modifications).
Recently, a lower bound on the sampling overhead $N$, which is a multiplicative factor in the number of circuit executions needed to achieve the same estimation accuracy as in the noiseless case, has been established as
\begin{equation}
    N \geq \qty(\frac{4^n\nu(\mathcal{N}^{-1})-1}{4^n-1})^{L} - \frac{2^n-2}{4^n-1},
\end{equation}
where $\nu(\mathcal{N}^{-1}) = \mathrm{tr}[I\otimes\mathcal{N}^{-1}(\ketbra{\Gamma})^2]/4^n$ represents the purity of the Choi matrix of $\mathcal{N}^{-1}$ with $\ket{\Gamma} = \sum_i \ket{i}\ket{i}$ (Theorem~S1 of Ref.~\cite{tsubouchi2023universal}).
When we assume that $\mathcal{N}$ is a Pauli noise, we can calculate $\nu(\mathcal{N}^{-1})$ as $\nu(\mathcal{N}^{-1}) = 1 + 2p_{\mathrm{err}} + O(p_i^2)$.
Consequently, we obtain 
\begin{equation}
    \label{eq_QEMbound}
    N \gtrsim \qty(1+ \frac{4^n}{4^n-1}2p_{\mathrm{err}})^{L} - \frac{2^n-2}{4^n-1}
\end{equation}
as the lower bound on the sampling overhead necessary to mitigate Pauli errors, considering up to the first degree of the error probability.

This lower bound can be saturated when we can convert noise layer $\mathcal{N}$ into the global white noise defined as 
\begin{equation}
    \mathcal{N}_{\mathrm{wn},p_{\mathrm{err}}} := (1-p_{\mathrm{err}})\mathcal{I} + p_{\mathrm{err}}\mathbb{E}_{P\in\mathcal{P}_n-\qty{I}^{\otimes n}}[\mathcal{E}_{P}],
\end{equation}
where each $n$-qubit Pauli error $P\in\mathcal{P}_n-\qty{I}^{\otimes n}$ occurs with equal probability of $(4^n-1)^{-1}p_{\mathrm{err}}$.
Global white noise can also be expressed as
\begin{equation}
    \mathcal{N}_{\mathrm{wn},p_{\mathrm{err}}}(\cdot) = \qty(1-\frac{4^n}{4^n-1}p_{\mathrm{err}})\cdot + \frac{4^n}{4^n-1}p_{\mathrm{err}}I,
\end{equation}
so we can mitigate errors by simply rescaling the noisy expectation value as
\begin{equation}
    \qty(1-\frac{4^n}{4^n-1}p_{\mathrm{err}})^{-L}\mathrm{tr}[\rho_{\mathrm{noisy}}O] = \mathrm{tr}[\rho_{\mathrm{ideal}}O]
\end{equation}
with a sampling overhead of $N = \qty(1-\frac{4^n}{4^n-1}p_{\mathrm{err}})^{-2L}$.
This rescaling method represents a quadratic improvement over probabilistic error cancellation, whose sampling overhead is $N \sim \qty(1+ 2p_{\mathrm{err}})^{2L}$~\cite{piveteau2021error, suzuki2022quantum}.
Additionally, this sampling overhead saturates the lower bound given by Eq. (\ref{eq_QEMbound}) when $p_{\mathrm{err}}$ is small and $n$ is large. 
Thus, we can conclude that the rescaling method is a cost-optimal QEM method (see Fig. \ref{fig_cost}).

While the rescaling method effectively mitigate errors with minimal sampling overhead, this overhead still grows exponentially with the number of noisy layer $L$.
Therefore, a practical scenario arises when the total error rate $p_{\mathrm{tot}} := p_{\mathrm{err}}L$ can be regarded as a constant value.
In this situation,  if $p_{\mathrm{err}}$ is small and $n$ is large, the sampling overhead can be approximated as a constant value represented as $N\sim e^{2p_{\mathrm{tot}}}$.

We note that the lower bound described in Eq. (\ref{eq_QEMbound}) applies solely to QEM methods that do not utilize additional ancilla qubits or implement mid-circuit measurements.
Indeed, by introducing logical ancilla qubits, we can reduce the sampling overhead below the lower bound by performing error detection.
However, the use of logical ancilla qubits may impose an extra burden in terms of the space overhead, which may not be desirable when the number of qubits is scarce.
Therefore, we focus on the QEM methods that do not use ancilla qubits and regard the rescaling method as a cost-optimal QEM method.

\section{Generalization and proof of Theorem 1 in the main text}
In this section, we generalize Theorem~1 in the main text to arbitrary non-Clifford unitary $U$, so that we can apply symmetric Clifford twirling to any non-Clifford gates such as the Toffoli gate or multi-qubit Pauli rotation gates.
As in the main text, we define the Pauli subgroup $\mathcal{Q}_{U}$ as
\begin{equation}
    \label{eq_Q_U_supp}
    \mathcal{Q}_{U} :=\ev*{\qty{P\in\mathcal{P}_n\;|\;\mathrm{tr}[PU]\neq0}},
\end{equation}
where $\ev*{\cdot}$ represents the group generated by the elements within the brackets.
In Ref. \cite{mitsuhashi2023clifford}, it was shown that arbitrary Pauli subgroup $\mathcal{Q}_{U}$ can be decomposed into three parts as
\begin{equation}
    \label{eq_Pauli_subgroup_2}
    \mathcal{Q}_{U}  = W^\dagger(\qty{\mathrm{I},\mathrm{X},\mathrm{Y},\mathrm{Z}}^{\otimes n_1} \otimes \qty{\mathrm{I},\mathrm{Z}}^{\otimes n_2} \otimes \qty{\mathrm{I}}^{\otimes n_3})W
\end{equation}
up to phase, using a $\mathcal{Q}_{U}$-dependent Clifford unitary $W \in \mathcal{G}_n$ and $n_1 + n_2 + n_3 = n$.
For example,  $W = I$ and $n_1=0, n_2 = 1, n_3 = n-1$ hold for the T gate and the $R_z(\theta)$ gate, $n_1=0, n_2 = 3, n_3 = n-3$ and $W$ is the Hadamard gate acting on the third qubit for the Toffoli gate, and $W$ is the CNOT gate with $n_1=0,n_2=1,n_3=n-1$ when $U$ is the Pauli rotation gate of $Z^{\otimes 2}$.
By using the decomposition in Eq. (\ref{eq_Pauli_subgroup_2}), we obtain the following theorem.

\begin{thm}[general case]
    \label{thm_2_supply}
    Let $P \in \mathcal{P}_n$ be an $n$-qubit Pauli operator, $\mathcal{E}_P(\cdot) = P\cdot P$ be the corresponding Pauli channel, and $\mathcal{Q}_{U}$ be a Pauli subgroup defined as in Eq.~(\ref{eq_Q_U_supp}). 
    Let us decompose $\mathcal{Q}_{U}$ as in Eq. (\ref{eq_Pauli_subgroup_2}) and represent $P$ corresponding to $\mathcal{E}_P$ as $P = W^\dagger(P_1\otimes P_2 \otimes P_3)W$ up to phase, where $P_1\in \mathcal{P}_{n_1}$, $P_2\in \mathcal{P}_{n_2}$, $P_3\in \mathcal{P}_{n_3}$, and $W$ is a Clifford operator defined in Eq.~\eqref{eq_Pauli_subgroup_2}.
   Then, the Pauli channel scrambled through symmetric Clifford twirling $\mathscr{T}_{\mathcal{Q}_{U}}(\mathcal{E}_P) = \mathbb{E}_{D\in\mathcal{G}_{n, \mathcal{Q}_U}}[\mathcal{D}^\dagger\circ\mathcal{E}_P\circ\mathcal{D}]$ can be represented as:
    \begin{itemize}
        \item[(i)] when $P_2\notin \qty{\mathrm{I}, \mathrm{Z}}^{\otimes n_2}$,
        \begin{equation}
            \mathscr{T}_{\mathcal{Q}_{U}}(\mathcal{E}_P) 
            = \underset{\substack{{Q_2\in\qty{\mathrm{I}, \mathrm{Z}}^{\otimes n_2}} \\ {Q_3\in\mathcal{P}_{n_3}}}}{\mathbb{E}}\qty[\mathcal{E}_{W^\dagger(P_1\otimes P_2Q_2 \otimes Q_3)W}],
        \end{equation}
        \item[(ii)] when $P_2\in \qty{\mathrm{I}, \mathrm{Z}}^{\otimes n_2}$ and $P_3 \neq \mathrm{I}^{\otimes n_3}$,
        \begin{equation}
            \mathscr{T}_{\mathcal{Q}_{U}}(\mathcal{E}_P) 
            =\underset{\substack{Q_2\in\qty{\mathrm{I}, \mathrm{Z}}^{\otimes n_2} \\ Q_3\in\mathcal{P}_{n_3}\setminus\qty{\mathrm{I}}^{\otimes n_3}}}{\mathbb{E}}\qty[\mathcal{E}_{W^\dagger(P_1\otimes Q_2 \otimes Q_3)W}],
        \end{equation}
        \item[(iii)] when $P_2\in \qty{\mathrm{I}, \mathrm{Z}}^{\otimes n_2}$ and $P_3 = \mathrm{I}^{\otimes n_3}$,
        \begin{equation}
            \mathscr{T}_{\mathcal{Q}_{U}}(\mathcal{E}_P) =\mathcal{E}_P.
        \end{equation}
    \end{itemize}
\end{thm}

By using Theorem~\ref{thm_2_supply}, we can discuss how the noise affecting general non-Clifford unitary such as the Toffoli gate or multi-qubit Pauli rotation gates 
can be scrambled through symmetric Clifford twirling.
We note that Theorem~1 in the main text directly follows from Theorem~\ref{thm_2_supply} by setting $W = I$ and $n_1=0, n_2 = 1, n_3 = n-1$.

\begin{proof}
In Ref. \cite{mitsuhashi2023clifford}, it was shown that all elements in $\mathcal{Q}_U$-symmetric Clifford group
\begin{equation}
    \mathcal{G}_{n,\mathcal{Q}_{U}} := \qty{C\in\mathcal{G}_{n} \;|\; \forall P \in \mathcal{Q}_{U}, \;[C,P]=0}
\end{equation}
can be uniquely represented as
\begin{equation}
    \label{eq_Clifford_decomposition}
    D = W^\dagger (\mathrm{I}^{\otimes n_1} \otimes D_1) \times  (\mathrm{I}^{\otimes n_1} \otimes D_2 \otimes \mathrm{I}^{\otimes n_3})\times (\mathrm{I}^{\otimes n_1}\otimes \mathrm{I}^{\otimes n_2}\otimes D_3) W
\end{equation}
up to the phase.
Here, $D_1$ is an $n_2+n_3$-qubit unitary represented as 
\begin{equation}
    \label{eq_D_1}
    D_1 = \Lambda_1(R_1) \cdots \Lambda_{n_2}(R_{n_2}),
\end{equation}
where $R_i\in\mathcal{P}_{n_3}$ is a Pauli operator acting on the last $n_3$ qubits and $\Lambda_i(R_i)$ is the controlled-$R_i$ gate whose controlled qubit is the $i$-th qubit.
$D_2$ is a element of  $\qty{\mathrm{I},\mathrm{Z}}^{\otimes n_2}$-symmetric Clifford group
\begin{equation}
    D_2\in\mathcal{G}_{n_2,\qty{\mathrm{I},\mathrm{Z}}^{\otimes n_2}} := \qty{C\in\mathcal{G}_{n} \;|\; \forall R \in \qty{\mathrm{I},\mathrm{Z}}^{\otimes n_2}, \;[C,R]=0}
\end{equation}
and it can be uniquely represented as 
\begin{equation}
    \label{eq_D_2}
    D_2 =\prod_{\substack{i,j\in\qty{1,\cdots,n_2}\\i<j}}\mathrm{CZ}_{ij}^{\nu_{ij}} \prod_{i\in\qty{1,\cdots,n_2}}\mathrm{S}_{i}^{\mu_{i}},
\end{equation}
where $\mathrm{CZ}_{ij}$ is the CZ gate acting on the $i$-th and $j$-th qubits, $\mathrm{S}_i$ is the S gate acting on $i$-th qubit, $\nu_{ij}\in\qty{0,1}$, and $\mu_{i}\in\qty{0,1,2,3}$.
$D_3$ is an element of the Clifford group $\mathcal{G}_{n_3}$:
\begin{equation}    
    \label{eq_D_3}
    D_3 \in \mathcal{G}_{n_3}.
\end{equation}

Let us define the superchannels
\begin{eqnarray}
    \label{eq_twirling_decomposition}
    \mathscr{T}_1(\mathcal{N}) &:=& \mathbb{E}_{D_1} \qty[\mathcal{E}_{\mathrm{I}^{\otimes n_1} \otimes D_1}^\dagger\circ\mathcal{N}\circ\mathcal{E}_{\mathrm{I}^{\otimes n_1} \otimes D_1}],\\
    \mathscr{T}_2(\mathcal{N}) &:=& \mathbb{E}_{D_2} \qty[\mathcal{E}_{\mathrm{I}^{\otimes n_1} \otimes D_2\otimes\mathrm{I}^{\otimes n_3}}^\dagger\circ\mathcal{N}\circ\mathcal{E}_{\mathrm{I}^{\otimes n_1} \otimes D_2\otimes\mathrm{I}^{\otimes n_3}}],\\
    \mathscr{T}_3(\mathcal{N}) &:=& \mathbb{E}_{D_3} \qty[\mathcal{E}_{\mathrm{I}^{\otimes n_1} \otimes \mathrm{I}^{\otimes n_2} \otimes D_3}^\dagger\circ\mathcal{N}\circ\mathcal{E}_{\mathrm{I}^{\otimes n_1} \otimes \mathrm{I}^{\otimes n_2} \otimes D_3}],
\end{eqnarray}
where $\mathcal{E}_U(\cdot) := U\cdot U^\dagger$ and $\mathbb{E}_{D_i}$ denotes the average over all operators given by Eqs. (\ref{eq_D_1}), (\ref{eq_D_2}), and (\ref{eq_D_3}).
From Eq. (\ref{eq_Clifford_decomposition}), we can decompose the symmetric Clifford twirling superchannel $\mathscr{T}_{\mathcal{Q}_U}$ as
\begin{equation}
    \mathscr{T}_{\mathcal{Q}_U}(\mathcal{E}_P)
    =\mathcal{E}_{W^\dag}\circ(\mathscr{T}_{3}\ast\mathscr{T}_{2}\ast\mathscr{T}_{1}(\mathcal{E}_{W PW^\dag}))\circ\mathcal{E}_{W},
    \label{eq_superchannel_decomposition}
\end{equation}
where we define the composition of superchannels as $\mathscr{T}_2\ast\mathscr{T}_1(\cdot) := \mathscr{T}_2(\mathscr{T}_1(\cdot))$.
Equation~\eqref{eq_superchannel_decomposition} implies that it is sufficient to consider $\mathscr{T}_3\ast\mathscr{T}_2\ast\mathscr{T}_1(\mathcal{E}_{W PW^\dag})$ in order to know $\mathscr{T}_{\mathcal{Q}_U}(\mathcal{E}_P)$. 
Since $W$ is a Clifford unitary, $WPW^\dag $ is a Pauli operator and can be decomposed as $W PW^\dag=P_1\otimes P_2\otimes P_3$ up to phase with $P_1\in\mathcal{P}_{n_1}$, $P_2\in\mathcal{P}_{n_2}$, and $P_3\in\mathcal{P}_{n_3}$ in the same way as in Eq.~\eqref{eq_Pauli_subgroup_2}. 
Below, we see how the Pauli channel $\mathcal{E}_{W PW^\dag}$ is mapped by the superchannels in Eq. (\ref{eq_superchannel_decomposition}), especially in the following four cases.

First, we consider the case where $P_2\in \qty{\mathrm{I},\mathrm{Z}}^{\otimes n_2}$ and $P_3 = \mathrm{I}^{\otimes n_3}$.
Since $\mathrm{I}^{\otimes n_1} \otimes D_1$, $\mathrm{I}^{\otimes n_1} \otimes D_2 \otimes \mathrm{I}^{\otimes n_3}$, and $\mathrm{I}^{\otimes n_1}\otimes \mathrm{I}^{\otimes n_2}\otimes D_3$ all commutes with $WPW^\dag = P_1\otimes P_2 \otimes \mathrm{I}^{\otimes n_3}$ for $P_2\in \qty{\mathrm{I},\mathrm{Z}}^{\otimes n_2}$, we can easily see that
\begin{equation}
    \mathscr{T}_{3}\ast\mathscr{T}_{2}\ast\mathscr{T}_{1}(\mathcal{E}_{ P_1\otimes P_2 \otimes \mathrm{I}^{\otimes n_3}}) = \mathcal{E}_{ P_1\otimes P_2 \otimes \mathrm{I}^{\otimes n_3}}. \label{eq_twirling_case1}
\end{equation}

Next, we consider the case where $P_2\in \qty{\mathrm{I},\mathrm{Z}}^{\otimes n_2}$ and $P_3 \neq \mathrm{I}^{\otimes n_3}$. 
Since $P_1\otimes P_2 \otimes \mathrm{I}^{\otimes n_3}$ commutes with all $D_i$, we only need to consider the case with $WPW^\dag = \mathrm{I}^{\otimes n_1}\otimes \mathrm{I}^{\otimes n_2}\otimes P_3$.
Since $\Lambda_i(R_i)^\dagger P_3 \Lambda_i(R_i) = P_3$ when $[R_i, P_3] = 0$, $\Lambda_i(R_i)^\dagger P_3 \Lambda_i(R_i) = \mathrm{Z}_iP_3$ when $\{R_i, P_3\} = 0$, and Pauli operator $R_i$ which commutes or anticommutes with $P_3$ appears with equal frequency under the average of $D_1$, we obtain
\begin{equation}
    \mathscr{T}_{1}(\mathcal{E}_{\mathrm{I}^{\otimes n_1}\otimes \mathrm{I}^{\otimes n_2} \otimes P_3}) = \mathbb{E}_{Q_{2} \in \qty{\mathrm{I},\mathrm{Z}}^{\otimes n_2}}[\mathcal{E}_{\mathrm{I}^{\otimes n_1} \otimes Q_2 \otimes P_3}].
\end{equation}
In addition, since $D_2$ represented as in Eq. (\ref{eq_D_2}) commutes with  Pauli-Z operator, we obtain
\begin{equation}
    \mathscr{T}_{2}(\mathcal{E}_{\mathrm{I}^{\otimes n_1} \otimes Q_2 \otimes P_3}) = \mathcal{E}_{\mathrm{I}^{\otimes n_1} \otimes Q_2 \otimes P_3}
\end{equation}
for $Q_2\in\qty{\mathrm{I},\mathrm{Z}}^{\otimes n_2}$.
Furthermore, since $\mathscr{T}_{3}$ is just a Clifford twirling superchannel acting on the last $n_3$ qubits, we obtain
\begin{equation}
    \mathscr{T}_{3}(\mathcal{E}_{\mathrm{I}^{\otimes n_1} \otimes Q_2 \otimes P_3}) = \mathbb{E}_{Q_3\in\mathcal{P}_{n_3} \setminus \qty{\mathrm{I}}^{\otimes n_3}}[\mathcal{E}_{\mathrm{I}^{\otimes n_1} \otimes Q_2 \otimes Q_3}].
\end{equation}
Therefore, we have
\begin{equation}
    \mathscr{T}_3\ast\mathscr{T}_2\ast\mathscr{T}_1(\mathcal{E}_{\mathrm{I}^{\otimes n_1} \otimes \mathrm{I}^{\otimes n_2} \otimes P_3}) = \mathbb{E}_{Q_{2} \in \qty{\mathrm{I},\mathrm{Z}}^{\otimes n_2}}\mathbb{E}_{Q_3\in\mathcal{P}_{n_3} \setminus \qty{\mathrm{I}}^{\otimes n_3}}[\mathcal{E}_{\mathrm{I}^{\otimes n_1} \otimes Q_2 \otimes Q_3}]. \label{eq_twirling_case2}
\end{equation}

Then, we consider the case where $P_{2}  \notin \qty{\mathrm{I},\mathrm{Z}}^{\otimes n_2}$ and $P_3 = \mathrm{I}^{\otimes n_3}$.
Since $P_1\otimes P_{2,z} \otimes \mathrm{I}^{\otimes n_3}$ commutes with all $D_i$ for $P_{2,z}  \in \qty{\mathrm{I},\mathrm{Z}}^{\otimes n_2}$, we only need to consider the case with $WPW^\dag = \mathrm{I}^{\otimes n_1}\otimes P_{2,x}\otimes \mathrm{I}^{\otimes n_3}$ for $P_{2,x}  \in \qty{\mathrm{I},\mathrm{X}}^{\otimes n_2} \setminus \qty{\mathrm{I}}^{\otimes n_2}$.
Since $\Lambda_i(R_i)^\dagger \mathrm{X}_i \Lambda_i(R_i) = \mathrm{X}_iR_i$, where $\mathrm{X}_i$ being the Pauli-X operator acting on the $i$-th qubit, and all $R_i\in\mathcal{P}_{n_3}$ appears with equal frequency under the average of $D_1$, we obtain
\begin{equation}
    \mathscr{T}_{1}(\mathcal{E}_{\mathrm{I}^{\otimes n_1} \otimes P_{2,x} \otimes \mathrm{I}^{\otimes n_3}}) = \mathbb{E}_{Q_3\in\mathcal{P}_{n_3}}[\mathcal{E}_{\mathrm{I}^{\otimes n_1} \otimes P_{2,x} \otimes Q_3}].
\end{equation}
In addition, since $\mathrm{CZ}_{12}\mathrm{X}_1\mathrm{CZ}_{12}^\dagger = \mathrm{X}_1\mathrm{Z}_2$ and $\mathrm{S}\mathrm{X}\mathrm{S}^\dagger = i\mathrm{XZ}$, Pauli-X operator belonging to the middle $n_2$ qubit generates Pauli-Z operator with probability 1/2 for each of the $n_2$ qubit under the conjugation of random $D_2$, and thus 
\begin{equation}
    \mathscr{T}_{2}(\mathcal{E}_{\mathrm{I}^{\otimes n_1} \otimes P_{2,x} \otimes Q_3}) = \mathbb{E}_{Q_{2} \in \qty{\mathrm{I},\mathrm{Z}}^{\otimes n_2}}[\mathcal{E}_{\mathrm{I}^{\otimes n_1} \otimes P_{2,x}Q_2 \otimes Q_3}].
\end{equation}
Furthermore, since $\mathscr{T}_{3}$ is just a Clifford twirling superchannel acting  on the last $n_3$ qubits, we obtain
\begin{equation}
    \mathscr{T}_{3}(\mathcal{E}_{\mathrm{I}^{\otimes n_1} \otimes P_{2,x}Q_2 \otimes Q_3}) = \mathbb{E}_{Q_3'\in\mathcal{P}_{n_3}}[\mathcal{E}_{\mathrm{I}^{\otimes n_1} \otimes P_{2,x}Q_2 \otimes Q_3'}].
\end{equation}
Therefore, we have
\begin{equation}
     \mathscr{T}_3\ast\mathscr{T}_2 \ast\mathscr{T}_1(\mathcal{E}_{\mathrm{I}^{\otimes n_1} \otimes P_{2x} \otimes \mathrm{I}^{\otimes n_3}}) = \mathbb{E}_{Q_{2} \in \qty{\mathrm{I},\mathrm{Z}}^{\otimes n_2}}\mathbb{E}_{Q_3\in\mathcal{P}_{n_3}}[\mathcal{E}_{\mathrm{I}^{\otimes n_1} \otimes P_{2x}Q_2 \otimes Q_3}]. \label{eq_twirling_case3}
\end{equation}

Finally, we consider the case where $P_{2}  \notin \qty{\mathrm{I},\mathrm{Z}}^{\otimes n_2}$ and $P_3 \neq \mathrm{I}^{\otimes n_3}$.
Since $P_1\otimes P_{2,z} \otimes \mathrm{I}^{\otimes n_3}$ commutes with all $D_i$ for $P_{2,z}  \in \qty{\mathrm{I},\mathrm{Z}}^{\otimes n_2}$, we only need to consider the case with $WPW^\dag = \mathrm{I}^{\otimes n_1}\otimes P_{2,x}\otimes P_3$ for $P_{2,x} \in \qty{\mathrm{I},\mathrm{X}}^{\otimes n_2} \setminus \qty{\mathrm{I}}^{\otimes n_2}$.
Under the conjugation of $D_1$, we can see that $P = \mathrm{I}^{\otimes n_1}\otimes P_{n_2,x}\otimes P_3$ is mapped in the form of 
\begin{equation}
    (\mathrm{I}^{\otimes n_1} \otimes D_1) (\mathrm{I}^{\otimes n_1}\otimes P_{2,x}\otimes P_3) (\mathrm{I}^{\otimes n_1} \otimes D_1)^{\dagger} = \mathrm{I}^{\otimes n_1} \otimes P_{2,x}Q_2 \otimes Q_3,
\end{equation}
where $Q_{2} \in \qty{\mathrm{I},\mathrm{Z}}^{\otimes n_2}$ and $Q_{3} \in \mathcal{P}_{n_3}$.
Here, all $Q_{3} \in \mathcal{P}_{n_3}$ appears with equal probability when we randomly choose $D_1$.
When we think of applying $\mathscr{T}_3\ast\mathscr{T}_2$ to this Pauli channel, we obtain
\begin{equation}
    \mathscr{T}_3\ast\mathscr{T}_2(\mathcal{E}_{ \mathrm{I}^{\otimes n_1} \otimes P_{2,x}Q_2 \otimes Q_3}) = 
    \left\{
        \begin{array}{ll}
        \mathbb{E}_{Q_{2}' \in \qty{\mathrm{I},\mathrm{Z}}^{\otimes n_2}}\mathbb{E}_{Q_3'\in\mathcal{P}_{n_3} \setminus \qty{\mathrm{I}}^{\otimes n_3}}[\mathcal{E}_{\mathrm{I}^{\otimes n_1} \otimes P_{2,x}Q_2' \otimes Q_3'}]. & (Q_3 \neq \mathrm{I}^{\otimes n_3})\\
        \mathbb{E}_{Q_{2}' \in \qty{\mathrm{I},\mathrm{Z}}^{\otimes n_2}}[\mathcal{E}_{\mathrm{I}^{\otimes n_1} \otimes P_{2,x}Q_2' \otimes \mathrm{I}^{\otimes n_3}}]. & (Q_3 = \mathrm{I}^{\otimes n_3})
        \end{array}
    \right.
\end{equation}
from the discussion of the other case.
By recalling that every $Q_{3} \in \mathcal{P}_{n_3}$ appears with equal probability when we randomly choose $D_1$, we obtain
\begin{equation}
     \mathscr{T}_{3}\ast\mathscr{T}_{2}\ast\mathscr{T}_{1}(\mathcal{E}_{\mathrm{I}^{\otimes n_1}\otimes P_{2,x} \otimes P_{3}}) = \mathbb{E}_{Q_{2} \in \qty{\mathrm{I},\mathrm{Z}}^{\otimes n_2}}\mathbb{E}_{Q_3\in\mathcal{P}_{n_3}}[\mathcal{E}_{\mathrm{I}^{\otimes n_1} \otimes P_{2,x}Q_2 \otimes Q_3}]. \label{eq_twirling_case4}
\end{equation}

By combining Eqs.~\eqref{eq_superchannel_decomposition}, \eqref{eq_twirling_case1}, \eqref{eq_twirling_case2}, \eqref{eq_twirling_case3}, and \eqref{eq_twirling_case4}, we arrive at the conclusion of Theorem~\ref{thm_2_supply}.
\end{proof}

\red{\section{Symmetric Clifford twirling for general noise channels}
So far, we have assumed that the noise channel $\mathcal{N}$ following the non-Clifford layer $\mathcal{U}$ is Pauli noise.
While this assumption holds for the T gate or the Toffoli gate—both of which belong to the third level of the Clifford hierarchy~\cite{gottesman1999quantum} and can thus be Pauli-twirled into Pauli noise~\cite{cai2023quantum}—general non-Clifford unitaries cannot, in general, be Pauli-twirled.
In such cases, the noise channel $\mathcal{N}$ is not necessarily Pauli noise.
In this section, we analyze the performance of symmetric Clifford twirling under a general noise channel.}

\red{For simplicity, we consider the case where the non-Clifford layer $\mathcal{U}(\cdot) = U\cdot U^\dag$ is a Pauli-Z rotation gate applied to the first qubit, i.e., $U = e^{i\theta \mathrm{Z}\otimes \mathrm{I}^{\otimes n-1}}$, which is the main focus of this work.
A natural implementation of the Pauli-Z rotation gate is to synthesize it using Hadamard, S, and T gates.
Since we can Pauli-twirl all of these basic gates such that their associated noise becomes Pauli noise, we can ensure that the resulting noise channel for the Pauli-Z rotation gate is unital.
In this case, it is known that by twirling the Pauli-Z rotation gate using $\mathrm{S}^i$ gate with $i=0,1,2,3$, the noise channel $\mathcal{N}$ can be expressed as a composition of Pauli noise and coherent noise due to over-rotation~\cite{kwon2025criteria}:
$\mathcal{N}=\mathcal{N}_{\mathrm{Pauli}}\circ\mathcal{N}_{\mathrm{coh}}$, where
\begin{eqnarray}
    \mathcal{N}_{\mathrm{Pauli}}
    &=& (1-p_{x,y}-p_z)\mathcal{I} + \frac{p_{x,y}}{2}(\mathcal{E}_{\mathrm{X}\otimes \mathrm{I}^{\otimes n-1}} + \mathcal{E}_{\mathrm{Y}\otimes \mathrm{I}^{\otimes n-1}}) +p_z\mathcal{E}_{\mathrm{Z}\otimes \mathrm{I}^{\otimes n-1}},\\
    \mathcal{N}_{\mathrm{coh}}(\cdot) &=& e^{i\phi \mathrm{Z}\otimes \mathrm{I}^{\otimes n-1}} \cdot e^{-i\phi \mathrm{Z}\otimes \mathrm{I}^{\otimes n-1}}.
\end{eqnarray}}

\red{Since the symmetric Clifford operator $D\in\mathcal{G}_{n,\mathcal{Q}_{U}}$ commutes with the coherent noise $\mathcal{N}_{\mathrm{coh}}$, the coherent part remains unchanged under symmetric Clifford twirling.
Nevertheless, as we have discussed so far, the Pauli-X and Y components of the Pauli noise $\mathcal{N}_{\mathrm{Pauli}}$ can be scrambled through twirling:
\begin{equation}
    \mathscr{T}_{\mathcal{Q}_{U}}(\mathcal{N}) = \mathscr{T}_{\mathcal{Q}_{U}}(\mathcal{N}_{\mathrm{Pauli}}\circ\mathcal{N}_{\mathrm{coh}}) = \mathscr{T}_{\mathcal{Q}_{U}}(\mathcal{N}_{\mathrm{Pauli}})\circ\mathcal{N}_{\mathrm{coh}}
\end{equation}
with
\begin{equation}
    \label{eq_twirl_generalnoise}
    \mathscr{T}_{\mathcal{Q}_{U}}(\mathcal{N}_{\mathrm{Pauli}}) = 
    (1-p_{x,y}-p_z)\mathcal{I} + p_{x,y}\underset{\substack{Q_1\in\qty{\mathrm{X}, \mathrm{Y}} Q_2\in\mathcal{P}_{n-1}}}{\mathbb{E}}\qty[\mathcal{E}_{Q_1\otimes Q_2}] + p_z\mathcal{E}_{\mathrm{Z}\otimes \mathrm{I}^{\otimes n-1}}.
\end{equation}
Since the second term in $\mathscr{T}_{\mathcal{Q}_{U}}(\mathcal{N}_{\mathrm{Pauli}})$ represents global noise, this result highlights the effectiveness of symmetric Clifford twirling for general noise channels.
We note that the twirling of the Pauli-Z rotation gate using the S gate can be incorporated into the symmetric Clifford twirling procedure, as such operations are already included in the symmetric Clifford twirling.}

\section{Unitarity and the average noise strength}
In order to introduce the notion of white-noise approximation~\cite{dalzell2021random}, let us define unitarity and the average noise strength for the noise layer $\mathcal{N}$ in this section.

Let $\mathcal{N} = \sum_i E_i\cdot E_i^\dagger$ be a unital noise.
Then, we define the {\it unitarity} $u$~\cite{wallman2015estimating, carignan2019bounding, dalzell2021random} and the {\it average noise strength} $s$ of the noise channel as
\begin{eqnarray}
    \label{eq_unitarity}
    u 
    &=& \frac{2^n}{2^n-1}\qty(\mathbb{E}_{V\sim\mu_H}\qty[\tr[\mathcal{N}(V\ketbra{\psi}V^\dagger)^2]] - \frac{1}{2^n}),\\
    \label{eq_avnoisestr}
    s 
    &=& \frac{2^n}{2^n-1}\qty(\mathbb{E}_{V\sim\mu_H} \qty[\tr[V\ketbra{\psi}V^\dagger\mathcal{N}(V\ketbra{\psi}V^\dagger)]] - \frac{1}{2^n}),
\end{eqnarray}
where $\mathbb{E}_{V\sim\mu_H}$ denotes the average over the Haar measure $\mu_\mathrm{H}$ on the unitary group.
The unitarity $u$ is the expected purity of the output state under a random choice of input state, which has a minimum value of 0 and a maximum value of 1.
The average noise strength $s$ represents the noise strength of the twirled noisy channel as $\mathbb{E}_{V\sim\mu_H}[V^\dagger\mathcal{N}(V\cdot V^\dagger)V] = s\cdot + (1-s) I/2^n$~\cite{emerson2005scalable}.
We can represent the unitarity $u$ and the average noise strength $s$ using the Kraus operators of $\mathcal{N} = \sum_i E_i\cdot E_i^\dagger$ by using the 2-moment operator~\cite{mele2023introduction}
\begin{eqnarray}
    \mathcal{M}(\mathbb{O}) 
    &=& \mathbb{E}_{V \sim \mu_H}[V^{\otimes 2} \mathbb{O} V^{\dagger\otimes 2}] \\
    &=& \frac{2^{2n}\mathrm{tr}[\mathbb{O}] - 2^n\mathrm{tr}[\mathbb{O} \mathbb{F}] }{2^{2n}-1} \frac{\mathbb{I}}{2^{2n}} + \frac{2^{n}\mathrm{tr}[\mathbb{O} \mathbb{F}] - \mathrm{tr}[\mathbb{O}] }{2^{2n}-1} \frac{\mathbb{F}}{2^n},
    \label{eq_moment_op}
\end{eqnarray}
where $\mathbb{O}$ is an arbitrary operator, $\mathbb{I}$ is the identity operator and $\mathbb{F}$ is the swap operator on the two copies of $2^{n}$-dimensional Hilbert space.
By using the 2-moment operator $\mathcal{M}(\mathbb{O})$, we obtain
\begin{eqnarray}
    u
    &=& \frac{2^n}{2^n-1}\qty(\sum_{ij}\tr[(E_i \otimes E_j) \mathcal{M}((\ketbra{\psi})^{\otimes 2}) (E_i^\dagger \otimes E_j^\dagger)\mathbb{F}] - \frac{1}{2^n})\\
    &=& \frac{2^n}{2^n-1}\qty(\sum_{ij}\tr[\qty(E_i \otimes E_j) \qty(\frac{2^{2n} - 2^n}{2^{2n}-1} \frac{\mathbb{I}}{2^{2n}} + \frac{1}{2^n+1}\frac{\mathbb{F}}{2^n}) (E_i^\dagger \otimes E_j^\dagger)\mathbb{F}] - \frac{1}{2^n})\\
    &=& \frac{2^n}{2^n-1}\qty(\frac{2^{n} - 1}{2^n(2^{2n}-1)}\sum_{ij}\mathrm{tr}[E_iE_i^\dagger E_jE_j^\dagger] + \frac{1}{2^n(2^n+1)}\sum_{ij}\mathrm{tr}|[E_iE_j^\dagger]|^2 - \frac{1}{2^n})\\
    &=& \frac{\sum_{ij} |\mathrm{tr}[E_iE_j^\dagger]|^2-1}{4^n-1}
\end{eqnarray}
for the unitarity $u$ and 
\begin{eqnarray}
    s
    &=& \frac{2^n}{2^n-1}\qty(\sum_{i}\tr[(E_i \otimes I) \mathcal{M}((\ketbra{\psi})^{\otimes 2}) (E_i^\dagger \otimes I)\mathbb{F}] - \frac{1}{2^n})\\
    &=& \frac{2^n}{2^n-1}\qty(\sum_{i}\tr[\qty(E_i \otimes I) \qty(\frac{2^{2n} - 2^n}{2^{2n}-1} \frac{\mathbb{I}}{2^{2n}} + \frac{1}{2^n+1}\frac{\mathbb{F}}{2^n}) (E_i^\dagger \otimes I)\mathbb{F}] - \frac{1}{2^n})\\
    &=& \frac{2^n}{2^n-1}\qty(\frac{2^{n} - 1}{2^n(2^{2n}-1)}\sum_{i}\mathrm{tr}[E_iE_i^\dagger] + \frac{1}{2^n(2^n+1)}\sum_{i}|\mathrm{tr}[E_i]|^2 - \frac{1}{2^n})\\
    &=& \frac{\sum_i |\mathrm{tr}[E_i]|^2 - 1}{4^n-1}
\end{eqnarray}
for the average noise strength $s$, where we used $\sum_i E_iE_i^\dagger = I$ for unital noise.
Thus, for Pauli noise $\mathcal{N}(\cdot) = (1-p_{\mathrm{err}})\cdot + \sum_i p_i P_i\cdot P_i$, these parameters can be represented as
\begin{eqnarray}
    \label{eq_unitarity_Pauli_}
    u &=& 1 - \frac{4^n}{4^n-1}2p_{\mathrm{err}} + \frac{4^n}{4^n-1}\qty(p_{\mathrm{err}}^2 + \red{\sum_{i=1}^{4^n-1}} p_i^2),\\
    \label{eq_avnoisestr_Pauli_}
    s &=& 1 - \frac{4^n}{4^n-1}p_{\mathrm{err}}.
\end{eqnarray}

\section{White-noise approximation in the early-FTQC regime}
In the realm of NISQ computing, it has been demonstrated that the effective noise of random quantum circuits consisting of noisy 2-qubit gates can be approximated as the global white  noise~\cite{dalzell2021random}.
This approximation, known as the {\it white-noise approximation}, arises from the scrambling of local errors to other qubits due to the randomness of gates.
In the early FTQC regime, the primary source of errors is not multi-qubit Clifford gates, but rather (potentially single-qubit) non-Clifford gates.
Conversely, as long as the gate is Clifford, we can execute multi-qubit operations with negligible errors.
This implies that the Clifford layer $\mathcal{C}_{l}(\cdot) = C_l\cdot C_l^\dagger$ can, in theory, represent any $n$-qubit Clifford unitary.
Thus, if we want to capture the typical behavior of the logical circuits, we may consider the case where $C_l$ is randomly and uniformly drawn from the $n$-qubit Clifford group $\mathcal{G}_n$.
In such a scenario, local noise is anticipated to be more effectively scrambled to other qubits compared to NISQ circuits, where randomness arises from noisy 2-qubit gates. 
Therefore, we can assert that the white-noise approximation is better suited for the early FTQC regime than the NISQ regime.

As in Sec.~S2, we focus on the scenario where we aim to estimate the expectation value of an observable $O$ for the $n$-qubit 
noiseless quantum state
\begin{equation}
    \rho_{\mathrm{ideal}} := \mathcal{C}_{L+1}\circ\mathcal{U}\circ\mathcal{C}_{L}\circ \cdots \circ\mathcal{U}\circ\mathcal{C}_{1}(\rho_0),
\end{equation}
but we only have access to the noisy logical quantum state
\begin{equation}
    \rho_{\mathrm{noisy}} := \mathcal{C}_{L+1}\circ\mathcal{N}\circ\mathcal{U}\circ\mathcal{C}_{L}\circ \cdots \circ\mathcal{N}\circ\mathcal{U}\circ\mathcal{C}_{1}(\rho_0),
\end{equation}
where $\mathcal{C}_l(\cdot) := C_l \cdot C_l^\dagger$ is the $l$-th Clifford layer, $\mathcal{U}(\cdot) := U \cdot U^\dagger$ is the non-Clifford layer, $\mathcal{N}$ is the noise layer, $\rho_0$ is the initial state, and $L$ represents the number of the noisy non-Clifford layers.
Here, for simplicity, we assume that $O$ is a traceless observable satisfying $-I\leq O \leq I$ and $\mathcal{N}$ is a unital noise.

When we assume white-noise approximation, we can estimate the ideal expectation value $\mathrm{tr}[\rho_{\mathrm{ideal}}O]$ by rescaling the noisy expectation value $\mathrm{tr}[\rho_{\mathrm{noisy}}O]$ by some factor $R$.
As the performance of white-noise approximation in the early FTQC regime for mitigating errors, we derive the following theorem by utilizing the discussion of Sec.~5.2 of Ref.~\cite{dalzell2021random}.
\begin{thm}
    \label{thm_S1}
    Let each Clifford gate $C_l$ be drawn randomly and uniformly from the $n$-qubit Clifford group $\mathcal{G}_n$.
    Then, the noisy expectation value rescaled by the constant factor $R = (s/u)^L$ satisfies
    \begin{eqnarray}
        \label{eq_bias_general}
        \mathbb{E}_{C}[|R\mathrm{tr}[\rho_{\mathrm{noisy}}O] - \mathrm{tr}[\rho_{\mathrm{ideal}}O]|]
        \leq \sqrt{\frac{2^n-1}{2^n+1} \qty(1 - \qty(\frac{s^2}{u})^L )},
    \end{eqnarray}
    where $\mathbb{E}_{C}$ represents the uniform average over all Clifford unitary $C_l$ over $\mathcal{G}_n$. Here, we assumed that $\Vert O\Vert \leq 1$.
\end{thm}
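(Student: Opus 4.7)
The plan is to reduce the absolute deviation to an $L^2$ calculation via Jensen's inequality, $\mathbb{E}_C[|\Delta|] \leq \sqrt{\mathbb{E}_C[\Delta^2]}$ with $\Delta := R\,\mathrm{tr}[\rho_{\mathrm{noisy}} O] - \mathrm{tr}[\rho_{\mathrm{ideal}} O]$, and then to evaluate the second moment exactly by exploiting the fact that the Clifford group is a unitary 2-design. Expanding the square yields three terms,
\begin{equation*}
\mathbb{E}_C[\Delta^2] = R^2\, \mathbb{E}_C[\mathrm{tr}[\rho_{\mathrm{noisy}} O]^2] - 2R\, \mathbb{E}_C[\mathrm{tr}[\rho_{\mathrm{noisy}} O]\,\mathrm{tr}[\rho_{\mathrm{ideal}} O]] + \mathbb{E}_C[\mathrm{tr}[\rho_{\mathrm{ideal}} O]^2],
\end{equation*}
each of which can be rewritten as $\mathrm{tr}\bigl[\mathbb{E}_C[\tilde\rho^{(1)}\otimes\tilde\rho^{(2)}]\, O^{\otimes 2}\bigr]$, where the two copies are either both noisy, both ideal, or one of each, and share the same realization of $\{C_l\}$.

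First I would integrate out the final Clifford $C_{L+1}$ using the 2-moment operator $\mathcal{M}$ from Eq.~(\ref{eq_moment_op}), which projects $\tilde\rho^{(1)}\otimes\tilde\rho^{(2)}$ onto the span of $\{\mathbb{I},\mathbb{F}\}$ with coefficients determined entirely by $\mathrm{tr}[\tilde\rho_L^{(1)}\tilde\rho_L^{(2)}]$. Since $O$ is traceless, $\mathrm{tr}[O^{\otimes 2}]=\mathrm{tr}[O]^2=0$ and only the $\mathbb{F}$-component contributes, collapsing each term to a multiple of $\mathbb{E}_C[\mathrm{tr}[\tilde\rho_L^{(1)}\tilde\rho_L^{(2)}]]\cdot\mathrm{tr}[O^2]/(2^n(2^n+1))$. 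The problem therefore reduces to computing the overlap expectations $\pi_L := \mathbb{E}_C[\mathrm{tr}[(\tilde\rho_L^{\mathrm{noisy}})^2]]$, $\chi_L := \mathbb{E}_C[\mathrm{tr}[\tilde\rho_L^{\mathrm{noisy}}\tilde\rho_L^{\mathrm{ideal}}]]$, and the trivial $\mathbb{E}_C[\mathrm{tr}[(\tilde\rho_L^{\mathrm{ideal}})^2]]=1$.

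Next I would derive one-step recursions for these overlaps by peeling off one layer $(\mathcal{N}\circ\mathcal{U}\circ\mathcal{C}_l)^{\otimes 2}$ at a time. Each intermediate Clifford averaging again invokes $\mathcal{M}$, $\mathcal{U}^{\otimes 2}$ preserves both $\mathbb{I}$ and $\mathbb{F}$, and the noise step is handled by the identities
\begin{equation*}
(\mathcal{N}\otimes\mathcal{N})(\mathbb{F}) = \tfrac{1-u}{2^n}\mathbb{I} + u\mathbb{F}, \qquad (\mathcal{I}\otimes\mathcal{N})(\mathbb{F}) = \tfrac{1-s}{2^n}\mathbb{I} + s\mathbb{F},
\end{equation*}
which follow by evaluating $\mathrm{tr}[(\mathcal{N}\otimes\mathcal{N})(\mathbb{F})\,\mathbb{F}]=\sum_{ij}|\mathrm{tr}[E_i E_j^\dagger]|^2$ and $\mathrm{tr}[(\mathcal{I}\otimes\mathcal{N})(\mathbb{F})\,\mathbb{F}]=\sum_i|\mathrm{tr}[E_i]|^2$, and recognizing these sums as $u(4^n-1)+1$ and $s(4^n-1)+1$ from the Kraus-form expressions for unitarity and average noise strength. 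This yields the linear recursions $\pi_{l+1} = (1-u)/2^n + u\pi_l$ and $\chi_{l+1} = (1-s)/2^n + s\chi_l$ with common initial value $1$, whose closed forms are $\pi_L = 1/2^n + u^L(1-1/2^n)$ and $\chi_L = 1/2^n + s^L(1-1/2^n)$.

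Assembling the pieces gives $\mathbb{E}_C[\Delta^2] = \mathrm{tr}[O^2]\,(R^2 u^L - 2R s^L + 1)/(2^n(2^n+1))$, which is minimized at $R=(s/u)^L$ with value proportional to $1-(s^2/u)^L$. Bounding $\mathrm{tr}[O^2]$ using $-I\leq O \leq I$ and invoking Jensen's inequality then delivers the stated inequality. I expect the main bookkeeping obstacle to be the hybrid (one-noisy, one-ideal) recursion: one must carefully set up a two-copy evolution in which the noise acts asymmetrically on the two factors, which is precisely what forces the average noise strength $s$ rather than the unitarity $u$ to appear in the cross term, and this asymmetry is what pins the optimal rescaling to $(s/u)^L$.
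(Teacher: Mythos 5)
Your proposal is correct, and its core machinery coincides with the paper's: both reduce the average bias to second moments via Jensen, evaluate them exactly using the Clifford group's 2-design property and the 2-moment operator on the span of $\{\mathbb{I},\mathbb{F}\}$, obtain the decay factors $u^L$ and $s^L$ from the trace identities $\sum_{ij}|\mathrm{tr}[E_iE_j^\dagger]|^2=(4^n-1)u+1$ and $\sum_i|\mathrm{tr}[E_i]|^2=(4^n-1)s+1$, and pin the optimal rescaling to $R=(s/u)^L$ by minimizing the quadratic $R^2u^L-2Rs^L+1$. Where you differ is the treatment of the observable: the paper first invokes Haar invariance to diagonalize $O$ and drop $\mathcal{U}$, passes to the output probability distributions, and pays a $2^{n/2}$ factor in an $\ell_1$-to-$\ell_2$ conversion before computing the collision-type quantities $Z_0,Z_1,Z_2$; you instead contract the two-copy averaged state directly against $O^{\otimes 2}$, use tracelessness to kill the $\mathbb{I}$-component and $\mathrm{tr}[O^2]\leq 2^n$, arriving at $\mathbb{E}_C[\Delta^2]=\frac{\mathrm{tr}[O^2]}{2^n(2^n+1)}(R^2u^L-2Rs^L+1)$, whose square root is bounded by $\sqrt{\frac{1}{2^n+1}(1-(s^2/u)^L)}$ — a bound that is tighter than the stated one by a factor of order $2^{n/2}$ and trivially implies it since $\frac{1}{2^n+1}\leq\frac{2^n-1}{2^n+1}$. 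This simplification (no diagonalization, no distribution-level norm conversion) is a genuine, if modest, improvement. Two small touch-ups: the displayed "identities" $(\mathcal{N}\otimes\mathcal{N})(\mathbb{F})=\frac{1-u}{2^n}\mathbb{I}+u\mathbb{F}$ and $(\mathcal{I}\otimes\mathcal{N})(\mathbb{F})=\frac{1-s}{2^n}\mathbb{I}+s\mathbb{F}$ are not true as operator equations (the image generally leaves $\mathrm{span}\{\mathbb{I},\mathbb{F}\}$); they hold only after the subsequent 2-fold twirl $\mathcal{M}$, or equivalently only their overlaps with $\mathbb{I}$ and $\mathbb{F}$ enter the recursion, which is all you actually use — state it that way. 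Also make explicit that the common initial value $\pi_0=\chi_0=1$ uses purity of $\rho_0$, as in the paper's choice $\rho_0=\ketbra{0^n}{0^n}$.
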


The sampling overhead $N$ required to obtain the rescaled noisy expectation value $R\mathrm{tr}[\rho_{\mathrm{noisy}}O]$ satisfies
\begin{equation}
    \label{eq_cost_rescale}
    N = \qty(\frac{s}{u})^{2L} \sim \qty(1+ \frac{4^n}{4^n-1}p_{\mathrm{err}})^{2L}
\end{equation}
up to the first degree of $p_{\mathrm{err}}$ for Pauli noise $\mathcal{N} = (1-p_{\mathrm{err}})\mathcal{I} + \sum_i p_i\mathcal{E}_{P_i}$, where Pauli error $\mathcal{E}_{P_i}(\cdot) := P_i\cdot P_i$ occurs with probability $p_i$ and $p_{\mathrm{err}} = \sum_i p_i$ being the error probability.
This saturates the lower bound on the sampling cost represented as Eq. (\ref{eq_QEMbound}) when $p_{\mathrm{err}}$ is small and $n$ is large, thus validating this rescaling method as a cost-optimal QEM method.
This means that, besides converting local noise into global white noise through (symmetric) Clifford twirling, we can achieve cost-optimal QEM by assuming the white-noise approximation.

While the method described above allows us to mitigate errors with minimal sampling overhead, it still grows exponentially with $L$.
Therefore, a practical scenario arises when the total error rate $p_{\mathrm{tot}} := p_{\mathrm{err}}L$ is maintained at a constant value, resulting in a constant sampling overhead approximated as $e^{2p_{\mathrm{tot}}}$ for large $L$.
In this case, we obtain the following corollary for Pauli noise:
\begin{cor}
    \label{cor_1}
    Let each Clifford gate $C_l$ be randomly and uniformly drawn from the $n$-qubit Clifford group $\mathcal{G}_n$.
    Then, the noisy expectation value rescaled by the constant factor $R = (s/u)^L$ satisfies
    \begin{equation}
        \label{eq_bias_const}
        \mathbb{E}_{C}[|R\mathrm{tr}[\rho_{\mathrm{noisy}}O] - \mathrm{tr}[\rho_{\mathrm{ideal}}O]|] \lesssim   \frac{vp_{\mathrm{tot}}}{\sqrt{L}}
    \end{equation}   
    for large $L$ when $p_{\mathrm{tot}} = p_{\mathrm{err}}L$ is fixed to a constant value, where
    \begin{equation}
        v = \sqrt{\red{\sum_{i=1}^{4^n-1}}\qty(\frac{p_i}{p_{\mathrm{err}}} - \frac{1}{4^n-1})^2}
    \end{equation}
    represents the distance between the Pauli noise $\mathcal{N}$ and the global white noise. Here, we assumed that $\Vert O\Vert \leq 1$.
\end{cor}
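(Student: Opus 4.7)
The plan is to specialize Theorem~\ref{thm_S1} to the Pauli-noise setting by substituting the closed forms of $u$ and $s$ given in Eqs.~(\ref{eq_unitarity_Pauli_}) and~(\ref{eq_avnoisestr_Pauli_}), and then expand to leading order in the regime $p_{\mathrm{err}} = p_{\mathrm{tot}}/L \to 0$. The proof thus reduces to evaluating $(s^2/u)^L$ appearing inside the square root of Eq.~(\ref{eq_bias_general}) and showing that its deviation from unity is $v^2 p_{\mathrm{tot}}^2/L$ to leading order.

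First I would carry out a direct algebraic computation of $s^2 - u$. Writing $c := 4^n/(4^n-1)$, one has $s = 1 - c p_{\mathrm{err}}$ and $u = 1 - 2 c p_{\mathrm{err}} + c(p_{\mathrm{err}}^2 + \sum_i p_i^2)$, so
\begin{equation*}
s^2 - u \;=\; c(c-1)p_{\mathrm{err}}^2 - c\sum_i p_i^2 \;=\; -c\Bigl(\sum_i p_i^2 - \tfrac{p_{\mathrm{err}}^2}{4^n-1}\Bigr).
\end{equation*}
Next I would rewrite the bracket through $v$: expanding $v^2 = \sum_i (p_i/p_{\mathrm{err}} - 1/(4^n-1))^2$, using $\sum_i p_i = p_{\mathrm{err}}$ and the fact that the index runs over $4^n - 1$ non-identity Paulis, one finds $\sum_i p_i^2 = p_{\mathrm{err}}^2\bigl[v^2 + 1/(4^n-1)\bigr]$. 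The $1/(4^n-1)$ pieces cancel exactly, giving the clean identity $s^2 - u = -c p_{\mathrm{err}}^2 v^2$, hence $s^2/u = 1 - c p_{\mathrm{err}}^2 v^2/u$.

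The third step is the asymptotic expansion. For $p_{\mathrm{err}} = p_{\mathrm{tot}}/L$ with $p_{\mathrm{tot}}$ fixed and $L$ large, both $c$ and $u$ equal $1 + O(p_{\mathrm{err}})$, so $(s^2/u)^L = (1 - p_{\mathrm{err}}^2 v^2 + O(p_{\mathrm{err}}^3))^L = 1 - v^2 p_{\mathrm{tot}}^2/L + O(L^{-2})$. Plugging this into Theorem~\ref{thm_S1} and bounding $\sqrt{(2^n-1)/(2^n+1)} \le 1$ yields the desired bound $\mathbb{E}_{C}[|R\,\mathrm{tr}[\rho_{\mathrm{noisy}}O] - \mathrm{tr}[\rho_{\mathrm{ideal}}O]|] \lesssim v\,p_{\mathrm{tot}}/\sqrt{L}$.

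The derivation is essentially algebraic, so the main subtlety I anticipate is order-of-magnitude bookkeeping. One must confirm that the $O(p_{\mathrm{err}}^3)$ correction to $s^2/u$ contributes only $O(p_{\mathrm{tot}}^3/L^2)$ after being raised to the $L$-th power, remaining subleading to $v^2 p_{\mathrm{tot}}^2/L$ uniformly in $n$, even in the worst case $v = \Theta(1)$ of a single-qubit Pauli-X channel. Once that bookkeeping is verified, the $n$-independence of the leading behavior is what permits writing the bound without an explicit dimensional prefactor and justifies the $\lesssim$ notation.
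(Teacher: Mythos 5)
Your approach matches the paper's: substitute the Pauli-noise expressions for $u$ and $s$ from Eqs.~(\ref{eq_unitarity_Pauli_})--(\ref{eq_avnoisestr_Pauli_}) into Theorem~\ref{thm_S1}, identify $\sum_i p_i^2 - p_{\mathrm{err}}^2/(4^n-1) = v^2 p_{\mathrm{err}}^2$, and expand $(s^2/u)^L$ for large $L$ at fixed $p_{\mathrm{tot}}$. The only slip is the claim that $c := 4^n/(4^n-1) = 1 + O(p_{\mathrm{err}})$: $c$ is a dimension-dependent constant with $c = 1 + O(4^{-n})$, independent of $p_{\mathrm{err}}$, so it should be carried along; this does not affect the conclusion because $\sqrt{\tfrac{2^n-1}{2^n+1}\,c} = \tfrac{2^n}{2^n+1} \le 1$, which is precisely how the paper absorbs it in the final inequality.
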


In the early FTQC regime, the logical error probability $p_{\mathrm{err}}$ is expected to decrease as hardware improves, allowing for an increase in the number of available noisy non-Clifford operations $L$ for a fixed sampling overhead. Corollary \ref{cor_1} indicates that, in this scenario, the bias obtained through rescaling the noisy expectation value diminishes to 0, thereby confirming the effectiveness of the white-noise approximation as the hardware improves in the early FTQC regime.
Even if $L$ may not be sufficiently large to achieve the desired accuracy, we can reduce the average bias by reducing the parameter $v$ using symmetric Clifford twirling.

While Corollary \ref{cor_1} assumes that each Clifford gate $\mathcal{C}_l$ is randomly sampled from the global $n$-qubit Clifford unitary, we anticipate a similar scaling behavior of $|R\mathrm{tr}[\rho_{\mathrm{noisy}}O] - \mathrm{tr}[\rho_{\mathrm{ideal}}O]| \sim O(v/\sqrt{L})$ holds even for simpler structured circuits, such as the Trotterized Hamiltonian simulation circuits demonstrated in the main text.
Let us denote the Trotter step size as $T$.
Then, we can represent $L$ as $L = nMT$, where $M$ is the number of terms in the Hamiltonian.
Therefore, we expect the scaling of $|R\mathrm{tr}[\rho_{\mathrm{noisy}}O] - \mathrm{tr}[\rho_{\mathrm{ideal}}O]| \sim O(v/\sqrt{nT})$ to hold for Trotterized Hamiltonian simulation circuits.

\begin{figure*}[t]
    \begin{center}
        \resizebox{0.99\hsize}{!}{\includegraphics{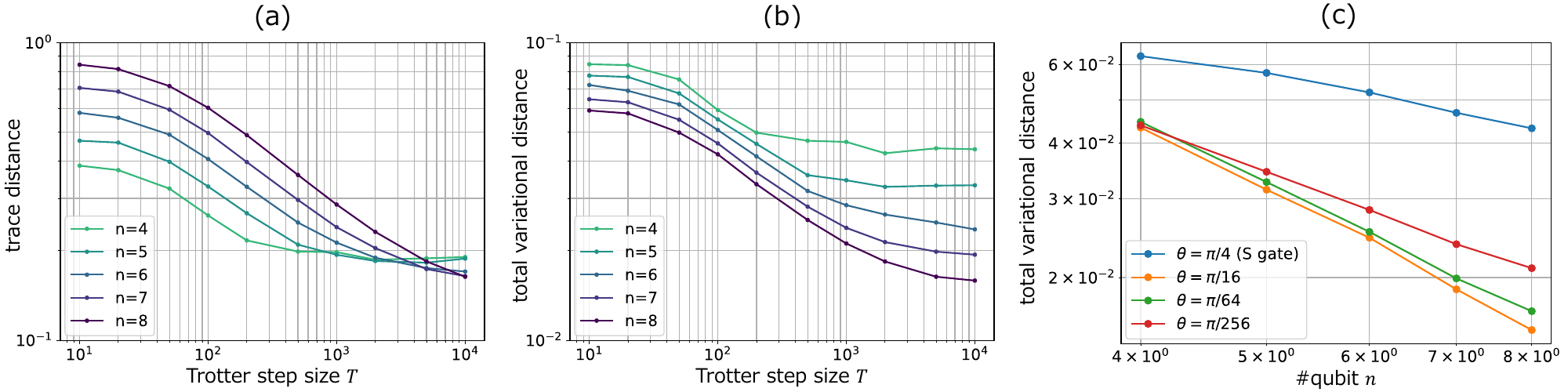}}
        \caption{Performance of white-noise approximation for Trotterized Hamiltonian simulation circuit of 1D Heisenberg model. Panel (a) represents the trace distance between the ideal noiseless state $\rho_{\mathrm{ideal}}$ and the rescaled virtual noisy state $R\rho_{\mathrm{noisy}}+(1-R)I/2^n$ averaged over random input state. Panel (b) and (c) represent the total variational distance between the probabilistic distribution obtained through measurements on the ideal noiseless state $\rho_{\mathrm{ideal}}$ and the rescaled virtual noisy state $R\rho_{\mathrm{noisy}}+(1-R)I/2^n$, averaged over random input state and measurement bases.
        Panel (a) and (b) show the Trotter step size $L$ dependence for different qubit count $n$ with a fixed rotation angle $\theta = \pi/256$, while Panel (c) shows the qubit count $n$ dependence for different rotation angle $\theta$ with a fixed Trotter step size $L = 1000$. We fix the total error rate as $p_{\mathrm{tot}} := p_{\mathrm{err}}nMT = 1$.
        }
        \label{fig_numerics_supply}
    \end{center}
\end{figure*}

Let us first compare this analysis with the large-scale Clifford simulation shown in Fig.~4 in the main text.
In this figure, we observe that the average bias scales as $1/\sqrt{n}$ for a fixed $T$, which is consistent with our theoretical analysis.
Furthermore, we find that the scaling improves to $1/n$ when applying 2-sparse symmetric Clifford twirling.
This improvement can be attributed to the reduction of $v$ from $1$ to $O(1/\sqrt{n})$.

Next, we compare the above analysis with the non-Clifford state simulation for the Trotterized Hamiltonian simulation circuit of the 1D Heisenberg model under the open boundary condition, depicted in Fig.~\ref{fig_numerics_supply}.
In this setup, we assume that the noisy non-Clifford layer is given as $\mathcal{N} \circ \mathcal{U}$ instead of the noiseless $\mathcal{U}(\cdot) = U\cdot U^\dag$, where $U = e^{i\theta \mathrm{Z}\otimes \mathrm{I}^{\otimes n-1}}$ represents a Pauli-Z rotation gate.
We sandwich this non-Clifford layer with some Clifford operations to give the target Pauli rotation.
As for the noise, we assume that $\mathcal{N}$ represents depolarizing noise.
We set the total error rate to $p_{\mathrm{tot}} = 1$, which results in the constant sampling overhead of $e^2\sim7$.

Fig.~\ref{fig_numerics_supply}(a) represents the trace distance between the ideal noiseless state $\rho_{\mathrm{ideal}}$ and the rescaled virtual noisy state $R\rho_{\mathrm{noisy}}+(1-R)I/2^n$, averaged over random input states.
We note that the trace distance provides an upper bound on the bias when mitigating errors for arbitrary observables.
The trace distance decreases as the Trotter step size $T$ increases, but eventually saturates to a fixed value.
Additionally, the trace distance generally increases with the qubit count $n$ for small $T$.
These results contradict the above analysis, suggesting that the white-noise approximation does not hold in the worst case (i.e., for arbitrary observables) for Trotterized Hamiltonian simulation circuits.

However, we can justify the approximation in the average case, as shown in Fig.~\ref{fig_numerics_supply}(b) and (c).
These figures display the total variational distance between the probability distributions obtained from measurements on the ideal noiseless state $\rho_{\mathrm{ideal}}$ and the rescaled virtual noisy state $R\rho_{\mathrm{noisy}}+(1-R)I/2^n$, averaged over random input states and measurement bases.
Although the total variational distance also saturates as we increase the Trotter step size $T$, it exhibits a decrease of $O(1/\sqrt{n})$ as the qubit count $n$ grows.
Since the total variational distance provides an upper bound on the bias for observables diagonalizable by the measurement bases, these results suggest a scaling of $|R\mathrm{tr}[\rho_{\mathrm{noisy}}O] - \mathrm{tr}[\rho_{\mathrm{ideal}}O]| \sim O(1/\sqrt{n})$ for typical observables.
This average scaling is especially useful when we estimate expectation value through shadow tomography~\cite{huang2020predicting}, where we perform measurements on random basis.

Therefore, we expect the white-noise approximation to hold on average in the early FTQC regime, where quantum circuits are sufficiently deep.
It is worth noting that this property is specific to the early FTQC regime, as the white-noise approximation does not apply to shallow structured quantum circuits in the NISQ regime~\cite{foldager2023can}.
We also note that the validity of the Clifford simulation shown in Fig.~4 of the main text can be supported by Fig.~\ref{fig_numerics_supply}(c), as the qualitative behavior of the total variational distance is unaffected by the rotation angle $\theta$ of the Pauli-Z rotation gate $R_z(\theta)$.

\section{Proof of Theorem~\ref{thm_S1}}
In this section, we present the proof of Theorem~\ref{thm_S1}.
This proof mainly follows the discussion of Sec.~5.2 of Ref.~\cite{dalzell2021random}.

\begin{proof}
Since the $n$-qubit Clifford group $\mathcal{G}_n$ is a unitary 2-design and we only consider up to a second moment in the following proof, we can assume that each Clifford gate $C_l$ is sampled randomly from $n$-qubit unitary group according to the Haar measure.
Since the Haar measure on the unitary group is invariant under the right and left multiplication of unitary gates, we can insert any unitary gates before and after $C_l$, so we can neglect the non-Clifford layer $\mathcal{U}$ and assume that the observable $O$ can be diagonalized using the computational basis $\qty{\ket{x}}_x$ as $O = \sum_x o_x \ketbra{x}$, where $-1\leq o_x\leq 1$ and $\sum_x o_x = 0$.
Therefore, by defining the probability distributions $p_{\mathrm{ideal}}$ and $p_{\mathrm{noisy}}$ as
\begin{eqnarray}
    p_{\mathrm{ideal}}(x) &=& \mathrm{tr}[\ketbra{x}\mathcal{C}_{L+1}\circ\mathcal{C}_L\circ\cdots\circ \mathcal{C}_1(\ketbra{0^n})],\\ 
    p_{\mathrm{noisy}}(x) &=& \mathrm{tr}[\ketbra{x}\mathcal{C}_{L+1}\circ\mathcal{N}\circ\mathcal{C}_L\circ\cdots\circ \mathcal{N}\circ\mathcal{C}_1(\ketbra{0^n})],
\end{eqnarray}
we can upper-bound the average bias as
\begin{eqnarray}
    \mathbb{E}_{C}[|R\mathrm{tr}[\rho_{\mathrm{noisy}}O] - \mathrm{tr}[\rho_{\mathrm{ideal}}O]|]
    &=& \mathbb{E}_{C}\qty[\abs{\sum_x o_x\qty(Rp_{\mathrm{noisy}}(x) + (1-R)2^{-n}  - p_{\mathrm{ideal}}(x))}] \\
    &\leq& \mathbb{E}_{C}\qty[\sum_x\abs{(Rp_{\mathrm{noisy}}(x) + (1-R)2^{-n} - p_{\mathrm{ideal}}(x))}] \\
    &\leq& \mathbb{E}_{C}\qty[2^{n/2} \sqrt{\sum_x(Rp_{\mathrm{noisy}}(x) + (1-R)2^{-n} - p_{\mathrm{ideal}}(x))^2}] \\
    &\leq& \sqrt{2^n\mathbb{E}_{C}\qty[\sum_x(Rp_{\mathrm{noisy}}(x) + (1-R)2^{-n} - p_{\mathrm{ideal}}(x))^2]},
\end{eqnarray}
where we used $\sum_x o_x = 0$ in the first equality, the triangle inequality and $|o_x|\leq 1$ in the second inequality, $\norm{\mathbf{v}}_1 \leq 2^{n/2} \norm{\mathbf{v}}_2$ for $2^n$-dimensional vector $\mathbf{v}$ in the third inequality, and Jensen's inequality in the last inequality.
Then, by defining
\begin{eqnarray}
    Z_0 &=& 2^n \mathbb{E}_{C} \qty[\sum_xp_{\mathrm{ideal}}(x)^2] \\
    &=& 2^{2n} \mathbb{E}_{C} \qty[p_{\mathrm{ideal}}(0^n)^2], \\
    Z_1 &=& 2^n \mathbb{E}_{C} \qty[\sum_xp_{\mathrm{ideal}}(x)p_{\mathrm{noisy}}(0^n)] \\
    &=& 2^{2n} \mathbb{E}_{C} \qty[p_{\mathrm{ideal}}(0^n)p_{\mathrm{noisy}}(0^n)], \\
    Z_2 &=& 2^n \mathbb{E}_{C} \qty[\sum_xp_{\mathrm{noisy}}(x)^2] \\
    &=& 2^{2n} \mathbb{E}_{C} \qty[p_{\mathrm{noisy}}(0^n)^2],
\end{eqnarray}
we obtain
\begin{eqnarray}
    \mathbb{E}_{C}[|R\mathrm{tr}[\rho_{\mathrm{noisy}}O] - \mathrm{tr}[\rho_{\mathrm{ideal}}O]|]
    &\leq& \sqrt{2^n\mathbb{E}_{C}\qty[\sum_x(R(p_{\mathrm{noisy}}(x) -2^{-n}) - (p_{\mathrm{ideal}}(x) -2^{-n}))^2]} \\
    &=& \sqrt{R^2 (Z_2-1) - 2R (Z_1-1) + (Z_0 - 1)},
\end{eqnarray}
where the RHS is minimized when we take $R = \frac{Z_1-1}{Z_2-1}$.
In this case, we obtain
\begin{equation}
    \mathbb{E}_{C}[|R\mathrm{tr}[\rho_{\mathrm{noisy}}O] - \mathrm{tr}[\rho_{\mathrm{ideal}}O]|] \leq \sqrt{ (Z_0-1)\qty(1 - \frac{(Z_1-1)^2}{(Z_2-1)(Z_0-1)})}.
\end{equation}

Now, let us analyze $Z_0$, $Z_1$, and $Z_2$.
By using the 2-moment operator defined in Eq. (\ref{eq_moment_op}), we obtain
\begin{eqnarray}
    Z_0 
    &=& 2^{2n} \mathbb{E}_{C} \qty[\mathrm{tr}[\ketbra{0^n}\mathcal{C}_{L+1}\circ\mathcal{C}_L\circ\cdots\circ \mathcal{C}_1(\ketbra{0^n})]^2]\\
    &=& 2^{2n} \mathbb{E}_{C} \qty[\mathrm{tr}[(\ketbra{0^n}\mathcal{C}_{L+1}\circ\mathcal{C}_L\circ\cdots\circ \mathcal{C}_1(\ketbra{0^n}))^{\otimes 2} ]]\\
    &=& 2^{2n} \mathbb{E}_{C} \qty[\mathrm{tr}[\ketbra*{0^{2n}}\mathcal{C}_{L+1}^{\otimes 2}\circ\mathcal{C}_L^{\otimes 2}\circ\cdots\circ \mathcal{C}_1^{\otimes 2}(\ketbra*{0^{2n}})]]\\
    &=& 2^{2n}\mathrm{tr}[\ketbra*{0^{2n}}\underbrace{\mathcal{M} \circ \cdots \circ \mathcal{M}}_{L+1}(\ketbra*{0^{2n}}) ]].
\end{eqnarray}
In the same way, we obtain
\begin{eqnarray}
    \label{eq_z_moment_op}
    Z_i = 2^{2n}\mathrm{tr}[\ketbra*{0^{2n}}\underbrace{\mathcal{M} \circ \mathcal{N}_i \circ \cdots \circ \mathcal{M} \circ \mathcal{N}_i}_{L} \circ \mathcal{M}(\ketbra*{0^{2n}}) ]]
\end{eqnarray}
for $i = 0, 1, 2$, where we have defined 
\begin{eqnarray}
    \mathcal{N}_0 &=& \mathcal{I} \otimes \mathcal{I}, \\
    \mathcal{N}_1 &=& \mathcal{I} \otimes \mathcal{N}, \\
    \mathcal{N}_2 &=& \mathcal{N} \otimes \mathcal{N}
\end{eqnarray}
using the identity channel $\mathcal{I}$.

We next analyze Eq. (\ref{eq_z_moment_op}).
From Eq. (\ref{eq_moment_op}), we have
\begin{eqnarray}
    \mathcal{M}(\ketbra*{0^{2n}})  = \qty(1 - \frac{1}{2^{n}+1}) \frac{\mathbb{I}}{2^{2n}} + \frac{1}{2^{n}+1} \frac{\mathbb{F}}{2^n}.
\end{eqnarray}
Next, we calculate the action of $\mathcal{M}\circ\mathcal{N}_i$ on $\mathbb{I}/2^{2n}$ and $\mathbb{F}/2^{n}$.
Since $\mathcal{I}$ and $\mathcal{N}$ are both unital, we obtain
\begin{eqnarray}
    \mathcal{M}\circ\mathcal{N}_i\qty(\frac{\mathbb{I}}{2^{2n}}) = \frac{\mathbb{I}}{2^{2n}}
\end{eqnarray}
for $i = 0, 1, 2$.
For the input state $\mathbb{F}/2^{n}$, we obtain
\begin{eqnarray}
    \mathcal{M}\circ \mathcal{N}_0 \qty(\frac{\mathbb{F}}{2^{n}}) 
    &=& \mathcal{M} \qty(\frac{\mathbb{F}}{2^{n}}) \\
    &=& \frac{\mathbb{F}}{2^{n}}, \\
    \mathcal{M}\circ \mathcal{N}_1 \qty(\frac{\mathbb{F}}{2^{n}})
    &=& 2^{-n} \mathcal{M}(\mathcal{I} \otimes \mathcal{N}(\mathbb{F})) \\
    &=& \frac{2^{n}\mathrm{tr}[\mathcal{I} \otimes \mathcal{N}(\mathbb{F})] - \mathrm{tr}[\mathcal{I} \otimes \mathcal{N}(\mathbb{F}) \mathbb{F}] }{2^{2n}-1} \frac{\mathbb{I}}{2^{2n}} + \frac{\mathrm{tr}[\mathcal{I} \otimes \mathcal{N}(\mathbb{F}) \mathbb{F}] - 2^{-n}\mathrm{tr}[\mathcal{I} \otimes \mathcal{N}(\mathbb{F})] }{2^{2n}-1} \frac{\mathbb{F}}{2^n}\\
    &=& (1-s)\frac{\mathbb{I}}{2^{2n}} + s \frac{\mathbb{F}}{2^{2n}},\\
    \mathcal{M}\circ \mathcal{N}_2 \qty(\frac{\mathbb{F}}{2^{n}})
    &=& 2^{-n} \mathcal{M}(\mathcal{N} \otimes \mathcal{N}(\mathbb{F})) \\
    &=& \frac{2^{n}\mathrm{tr}[\mathcal{N} \otimes \mathcal{N}(\mathbb{F})] - \mathrm{tr}[\mathcal{N} \otimes \mathcal{N}(\mathbb{F}) \mathbb{F}] }{2^{2n}-1} \frac{\mathbb{I}}{2^{2n}} + \frac{\mathrm{tr}[\mathcal{N} \otimes \mathcal{N}(\mathbb{F}) \mathbb{F}] - 2^{-n}\mathrm{tr}[\mathcal{N} \otimes \mathcal{N}(\mathbb{F})] }{2^{2n}-1} \frac{\mathbb{F}}{2^n}\\
    &=& (1-u)\frac{\mathbb{I}}{2^{2n}} + u \frac{\mathbb{F}}{2^{2n}},
\end{eqnarray}
where we have used
\begin{eqnarray}
    \mathrm{tr}[\mathcal{I} \otimes \mathcal{N}(\mathbb{F})] 
    &=& \sum_i \mathrm{tr}[(I\otimes E_i) \mathbb{F} (I\otimes E_i^\dagger)] \\
    &=& \sum_i \mathrm{tr}[E_iE_i^\dagger] \\
    &=& 2^n,\\
    \mathrm{tr}[\mathcal{I} \otimes \mathcal{N}(\mathbb{F})\mathbb{F}] 
    &=& \sum_i \mathrm{tr}[(I\otimes E_i) \mathbb{F} (I\otimes E_i^\dagger) \mathbb{F}] \\
    &=& \sum_i \mathrm{tr}[E_i]\mathrm{tr}[E_i^\dagger] \\
    &=& (2^{2n}-1)s + 1, \\
    \mathrm{tr}[\mathcal{N} \otimes \mathcal{N}(\mathbb{F})] 
    &=& \sum_{ij} \mathrm{tr}[(E_j\otimes E_i) \mathbb{F} (E_j^\dagger \otimes E_i^\dagger)] \\
    &=& \sum_{ij} \mathrm{tr}[E_i E_j^\dagger E_j E_i^\dagger] \\
    &=& 2^n,\\
    \mathrm{tr}[\mathcal{N} \otimes \mathcal{N}(\mathbb{F})\mathbb{F}] 
    &=& \sum_i \mathrm{tr}[(E_j\otimes E_i) \mathbb{F} (E_j\otimes E_i^\dagger) \mathbb{F}] \\
    &=& \sum_i \mathrm{tr}[E_iE_j^\dagger]\mathrm{tr}[E_i^\dagger E_j] \\
    &=& (2^{2n}-1)u + 1.
\end{eqnarray}
Therefore, we obtain
\begin{eqnarray}
    Z_0 
    &=& 2^{2n}\qty(\qty(1 - \frac{1}{2^{n}+1}) \frac{\mathrm{tr}[\ketbra*{0^{2n}}\mathbb{I}]}{2^{2n}} + \frac{1}{2^{n}+1} \frac{\mathrm{tr}[\ketbra*{0^{2n}}\mathbb{F}]}{2^n}) \\
    &=& 1+\frac{2^n-1}{2^n+1}, \\
    Z_1
    &=& 2^{2n}\qty(\qty(1 - \frac{1}{2^{n}+1}s^L) \frac{\mathrm{tr}[\ketbra*{0^{2n}}\mathbb{I}]}{2^{2n}} + \frac{1}{2^{n}+1}s^L \frac{\mathrm{tr}[\ketbra*{0^{2n}}\mathbb{F}]}{2^n}) \\
    &=& 1+\frac{2^n-1}{2^n+1}s^L, \\
    Z_2
    &=& 2^{2n}\qty(\qty(1 - \frac{1}{2^{n}+1}u^L) \frac{\mathrm{tr}[\ketbra*{0^{2n}}\mathbb{I}]}{2^{2n}} + \frac{1}{2^{n}+1}u^L \frac{\mathrm{tr}[\ketbra*{0^{2n}}\mathbb{F}]}{2^n}) \\
    &=& 1+\frac{2^n-1}{2^n+1}u^L.
\end{eqnarray}

Thus, we have
\begin{eqnarray}
    \label{eq_bias_proof}
    \mathbb{E}_{C}[|R\mathrm{tr}[\rho_{\mathrm{noisy}}O] - \mathrm{tr}[\rho_{\mathrm{ideal}}O]|]
    &\leq& \sqrt{\frac{2^n-1}{2^n+1} \qty(1 - \qty(\frac{s^2}{u})^L )}.
\end{eqnarray}
for $R = (s/u)^L$.
\end{proof}

For Pauli noise $\mathcal{N} = (1-p_{\mathrm{err}})\mathcal{I} + \sum_i p_i\mathcal{E}_{P_i}$, we can calculate the RHS of Eq.~(\ref{eq_bias_proof}) as
\begin{eqnarray}
    \sqrt{\frac{2^n-1}{2^n+1} \qty(1 - \qty(\frac{s^2}{u})^L )} 
    &=& \sqrt{\frac{2^n-1}{2^n+1} \qty(1 - \qty(1 - \frac{4^n}{4^n-1}\qty(\sum_i p_i^2 - \frac{1}{4^n-1}p_{\mathrm{err}}^2) + O(p_i^3))^L )}\\
    &=& \sqrt{\frac{2^n-1}{2^n+1} \qty(1 - \qty(1 - \frac{4^n}{4^n-1}v^2p_{\mathrm{err}}^2 + O(p_i^3))^L )},
\end{eqnarray}
where we use Eqs.~(\ref{eq_unitarity_Pauli_}) and (\ref{eq_avnoisestr_Pauli_}).
Especially when we assume that $p_{\mathrm{tot}} = p_{\mathrm{err}}L$ is constant and $L \gg 1$, we have
\begin{eqnarray}
    \sqrt{\frac{2^n-1}{2^n+1} \qty(1 - \qty(\frac{s^2}{u})^L )} 
    &\sim& \sqrt{\frac{2^n-1}{2^n+1} \qty(1 - \qty(1 - \frac{4^n}{4^n-1}\frac{v^2p_{\mathrm{tot}}^2}{L^2})^L )} \\
    &\sim& \sqrt{\frac{2^n-1}{2^n+1}\frac{4^n}{4^n-1}\frac{v^2p_{\mathrm{tot}}^2}{L}}\\
    &\leq& \frac{vp_{\mathrm{tot}}}{\sqrt{L}}
\end{eqnarray}
up to the leading order of $L$.
Therefore, we obtain Corollary~\ref{cor_1}.

\end{document}